\newtheorem{defin}{Definition}
\newtheorem{theo}[defin]{Theorem}
 \newenvironment{theorem}{\begin{theo} \sl}{\end{theo}}
\newtheorem{lem}[defin]{Lemma}
 \newenvironment{lemma}{\begin{lem} \sl}{\end{lem}}
\newtheorem{coro}[defin]{Corollary}
 \newenvironment{corollary}{\begin{coro} \sl}{\end{coro}}
\newenvironment{proof}{\emph{Proof.}}{\hfill $\Box$\\}
\newcommand{\etal}{\emph{et~al.}}
\newcommand{\graph}[1]{\ensuremath{\theta_{(4 k + #1)}}-graph\xspace}
\newcommand{\canon}[2]{\ensuremath{T_{#1 #2}}}
\newcommand{\const}{\ensuremath{\boldsymbol{c}}\xspace}
\begin{document}

\title{The Price of Order\footnote{Research supported in part by NSERC and Carleton University's President's 2010 Doctoral Fellowship.}
}

\author{
\addtocounter{footnote}{1}
Prosenjit Bose\footnote{School of Computer Science, Carleton University, 1125 Colonel By Drive, Ottawa, K1S 5B6, Canada, \texttt{jit@scs.carleton.ca}} 
\and
Pat Morin\footnote{School of Computer Science, Carleton University, 1125 Colonel By Drive, Ottawa, K1S 5B6, Canada, \texttt{morin@scs.carleton.ca}}
\and
Andr\'e van Renssen\footnote{JST, ERATO, Kawarabayashi Large Graph Project}~$^{,}$\footnote{National Institute of Informatics, 2-1-2 Hitotsubashi, Chiyoda-ku, Tokyo, 101-8430, Japan,
\texttt{andre@nii.ac.jp}
}}

\date{}

\maketitle

\begin{abstract}
We present tight bounds on the spanning ratio of a large family of ordered $\theta$-graphs. A $\theta$-graph partitions the plane around each vertex into $m$ disjoint cones, each having aperture $\theta = 2 \pi/m$. An ordered $\theta$-graph is constructed by inserting the vertices one by one and connecting each vertex to the closest previously-inserted vertex in each cone. We show that for any integer $k \geq 1$, ordered $\theta$-graphs with $4k + 4$ cones have a tight spanning ratio of $1 + 2 \sin(\theta/2) / (\cos(\theta/2) - \sin(\theta/2))$. We also show that for any integer $k \geq 2$, ordered $\theta$-graphs with $4k + 2$ cones have a tight spanning ratio of $1 / (1 - 2 \sin(\theta/2))$. We provide lower bounds for ordered $\theta$-graphs with $4k + 3$ and $4k + 5$ cones. For ordered $\theta$-graphs with $4k + 2$ and $4k + 5$ cones these lower bounds are strictly greater than the worst case spanning ratios of their unordered counterparts. These are the first results showing that ordered $\theta$-graphs have worse spanning ratios than unordered $\theta$-graphs. Finally, we show that, unlike their unordered counterparts, the ordered $\theta$-graphs with 4, 5, and 6 cones are not spanners. 
\end{abstract}

\section{Introduction}
In a weighted graph $G$, let the distance $\delta_G(u, v)$ between two vertices $u$ and $v$ be the length of the shortest path between $u$ and $v$ in $G$. A subgraph $H$ of $G$ is a \emph{$t$-spanner} of $G$ if for all pairs of vertices $u$ and $v$, $\delta_H(u, v) \leq t \cdot \delta_G(u, v)$, where $t$ is a real constant and at least $1$. The \emph{spanning ratio} of $H$ is the smallest $t$ for which $H$ is a $t$-spanner. The graph $G$ is referred to as the {\em underlying graph}~\cite{NS-GSN-06}. We consider the situation where the underlying graph $G$ is a straightline embedding of the complete graph on a set of $n$ vertices in the plane. The weight of each edge $u v$ is the Euclidean distance $|u v|$ between $u$ and $v$. A spanner of such a graph is called a \emph{geometric spanner}. We look at a specific type of geometric spanner: $\theta$-graphs. 

Introduced independently by Clarkson~\cite{C87} and Keil~\cite{K88}, $\theta$-graphs are constructed as follows: for each vertex $u$, we partition the plane into $m_c$ disjoint cones with apex $u$, each having aperture $\theta = 2 \pi/m_c$. The $\theta$-graph is constructed by, for each cone with apex $u$, connecting $u$ to the vertex $v$ whose projection along the bisector of the cone is closest. When $m_c$ cones are used, we denote the resulting $\theta$-graph as $\theta_{m_c}$. Ruppert and Seidel~\cite{RS91} showed that the spanning ratio of these graphs is at most $1/(1 - 2 \sin (\theta/2))$, when $\theta < \pi/3$, i.e. there are at least seven cones. 

In this paper, we look at the ordered variant of $\theta$-graphs. The ordered $\theta$-graph is constructed by inserting the vertices one by one and connecting each vertex to the closest previously-inserted vertex in each cone (a more precise definition follows in the next section). These graphs were introduced by Bose~\etal~\cite{BGM04} in order to construct spanners with nice additional properties, such as logarithmic maximum degree and logarithmic diameter. The current upper bound on the spanning ratio of these graphs is $1/(1 - 2 \sin (\theta/2))$, when $\theta < \pi/3$. 

In 2010, Bonichon~\etal~\cite{BGHI10} showed that the unordered $\theta_6$-graph has spanning ratio 2. This was done by dividing the cones into two sets, positive and negative cones, such that each positive cone is adjacent to two negative cones and vice versa. It was shown that when edges are added only in the positive cones, in which case the graph is called the half-$\theta_6$-graph, the resulting graph is equivalent to the TD-Delaunay triangulation (the Delaunay triangulation where the empty region is an equilateral triangle) whose spanning ratio is~2, as shown by Chew~\cite{C89}. An alternative, inductive proof of the spanning ratio of the half-$\theta_6$-graph was presented by Bose~\etal~\cite{BFRV14}. This inductive proof was generalized to show that the \graph{2} has spanning ratio $1 + 2 \sin(\theta/2)$, where $k$ is an integer and at least 1. This spanning ratio is tight, i.e. there is a matching lower bound. Recently, the upper bound on the spanning ratio of the \graph{4} was improved~\cite{BCMRV15} to $1 + 2 \sin(\theta/2) / (\cos(\theta/2) - \sin(\theta/2))$ and those of the \graph{3} and the \graph{5} were improved to $\cos (\theta/4) / (\cos (\theta/2) - \sin (3\theta/4))$.

By applying techniques similar to the ones used to improve the spanning ratio of unordered $\theta$-graphs, we improve the spanning ratio of the ordered \graph{4} to $1 + 2 \sin(\theta/2) / (\cos(\theta/2) - \sin(\theta/2))$ and show that this spanning ratio is tight. Unfortunately, this inductive proof cannot be applied to ordered $\theta$-graphs with an odd number of cones, as the triangle we apply induction on can become larger, depending on the order in which the vertices are inserted. We also show that the ordered \graph{2} ($k \geq 2$) has a tight spanning ratio of $1 / (1 - 2 \sin(\theta/2))$. 

\begin{table}[ht]
  \begin{center}
    \begin{tabular}{| >{\centering\arraybackslash}m{\dimexpr.3\linewidth-2\tabcolsep} || >{\centering\arraybackslash}m{\dimexpr.37\linewidth-2\tabcolsep} | >{\centering\arraybackslash}m{\dimexpr.3\linewidth-2\tabcolsep} |}
    \hline
    & Upper Bound & Lower Bound \\ 
    \hline \hline
    $\theta_3$, $\theta_4$, $\theta_5$, and $\theta_6$-graph & - & \vspace{1ex} Not constant spanners. \\ [2ex]
    \hline
    \graph{2} & $\frac{1}{1 - 2 \sin \left( \frac{\theta}{2} \right)}$, for $k \geq 2$ & \vspace{1ex} $\frac{1}{1 - 2 \sin \left( \frac{\theta}{2} \right)}$ \\ [2ex]
    \hline
    \graph{3} & $\frac{1}{1 - 2 \sin \left( \frac{\theta}{2} \right)}$, for $k \geq 1$  & \vspace{1ex} $\frac{\cos \left( \frac{\theta}{4} \right) + \sin \theta}{\cos \left( \frac{3\theta}{4} \right)}$ \\ [2ex] 
    \hline
    \graph{4} & $1 + \frac{2 \sin \left( \frac{\theta}{2} \right)}{\cos \left( \frac{\theta}{2} \right) - \sin \left( \frac{\theta}{2} \right)}$, for $k \geq 1$  & \vspace{1ex} $1 + \frac{2 \sin \left( \frac{\theta}{2} \right)}{\cos \left( \frac{\theta}{2} \right) - \sin \left( \frac{\theta}{2} \right)}$ \\ [2ex]
    \hline
    \graph{5} & $\frac{1}{1 - 2 \sin \left( \frac{\theta}{2} \right)}$, for $k \geq 1$  & \vspace{1ex} $1 + \frac{2 \sin \left(\frac{\theta}{2}\right) \cdot \cos \left(\frac{\theta}{4}\right)}{\cos \left(\frac{\theta}{2}\right) - \sin \left(\frac{3\theta}{4}\right)}$\\ [2ex] 
    \hline
    \end{tabular}
  \end{center} 
  \caption{An overview of upper and lower bounds on the spanning ratio of ordered $\theta$-graphs. Upper bounds on ordered $\theta$-graphs with $4k+2$, $4k+3$, or $4k+5$ cones by Bose~\etal$^6$.}
  \label{tab:Summary}
\end{table}

Next, we provide lower bounds for ordered $\theta$-graphs with $4k + 3$ and $4k + 5$ cones (see Table~\ref{tab:Summary}). For ordered $\theta$-graphs with $4k + 2$ and $4k + 5$ cones these lower bounds are strictly greater than the worst case spanning ratios of their unordered counterparts. Finally, we show that ordered $\theta$-graphs with 3, 4, 5, and 6 cones are not spanners. For the ordered $\theta_3$-graph this is not surprising, as its unordered counterpart is connected~\cite{ABBBKRTV13}, but not a spanner~{El09}. For the ordered $\theta_4$, $\theta_5$, and $\theta_6$-graph, however, this is a bit surprising since their unordered counterparts have recently been shown to be spanners~\cite{BBCRV2013,BGHI10,BMRV15}. In other words, we show, for the first time, that obtaining the nice additional properties of the ordered $\theta$-graphs comes at a price.

\section{Preliminaries}
We define a \emph{cone} $C$ to be a region in the plane between two rays originating from a vertex referred to as the apex of the cone. When constructing an (ordered) $\theta_{m_c}$-graph, for each vertex $u$ consider the rays originating from $u$ with the angle between consecutive rays being $\theta = 2 \pi / m_c$. Each pair of consecutive rays defines a cone. The cones are oriented such that the bisector of some cone coincides with the vertical halfline through $u$ that lies above $u$. We call this cone $C_0$ of $u$ and we number the cones in clockwise order around $u$ (see Figure~\ref{fig:Cones}). The cones around the other vertices have the same orientation as the ones around $u$. We write $C_i^u$ to indicate the $i$-th cone of a vertex $u$. For ease of exposition, we only consider point sets that satisfy the following \emph{general position} assumption: no two vertices lie on a line parallel to one of the rays that defines a cone boundary and no two vertices lie on a line perpendicular to the bisector of a cone. 

\begin{figure}[ht]
  \center
  \begin{minipage}[t]{0.45\textwidth}
    \begin{center}
      \includegraphics{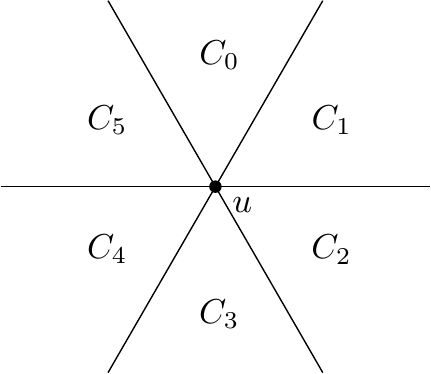}
    \end{center}
    \caption{The cones having apex $u$ in the (ordered) $\theta_6$-graph.\newline}
    \label{fig:Cones}
  \end{minipage}
  \hspace{0.05\linewidth}
  \begin{minipage}[t]{0.46\textwidth}
    \begin{center}
      \includegraphics{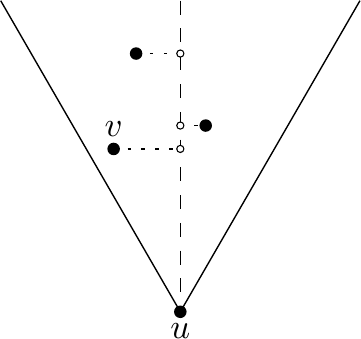}
    \end{center}
    \caption{Three previously-inserted vertices are projected onto the bisector of a cone of $u$. Vertex $v$ is the closest vertex.}
    \label{fig:Projection}
  \end{minipage}
\end{figure}

Given some ordering of the vertices, the ordered $\theta_{m_c}$-graph is constructed as follows: we insert the vertices in the order given by the ordering. When a vertex $u$ is inserted, for each cone $C_i$ of $u$, we add an edge from~$u$ to the closest previously-inserted vertex in that cone, where distance is measured along the bisector of the cone (see Figure~\ref{fig:Projection}). Note that our general position assumption implies that each vertex has a unique closest vertex in each of its non-empty cones and adds at most one edge per cone to the graph. As the ordered $\theta$-graph depends on the ordering of the vertices, different orderings can produce different $\theta$-graphs. 

Given a vertex $w$ in cone $C_i$ of vertex $u$, we define the \emph{canonical triangle} \canon{u}{w} as the triangle defined by the borders of $C_i$ and the line through $w$ perpendicular to the bisector of $C_i$. We use $m$ to denote the midpoint of the side of \canon{u}{w} opposite $u$ and $\alpha$ to denote the smaller unsigned angle between $u w$ and $u m$ (see Figure~\ref{fig:CanonicalTriangle}). Note that for any pair of vertices $u$ and $w$, there exist two canonical triangles: \canon{u}{w} and \canon{w}{u}.

\begin{figure}[h]
  \begin{center}
    \includegraphics{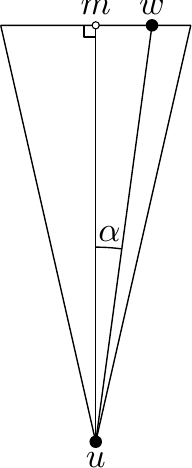}
  \end{center}
  \caption{The canonical triangle \canon{u}{w}.}
  \label{fig:CanonicalTriangle}
\end{figure}

Before we bound the spanning ratios of ordered $\theta$-graphs, we first introduce a few useful geometric lemmas. Note that Lemmas~\ref{lem:ApplyFourPoints} and \ref{lem:CalculationCase} are proven by Bose~\etal~\cite{BCMRV15}. We use $\angle xyz$ to denote the smaller angle between line segments $xy$ and $yz$. 

\begin{lemma}
  \label{lem:ApplyFourPoints} 
  \textbf{\em(Bose~\etal~\cite{BCMRV15}, Lemma~3)} Let $u$, $v$ and $w$ be three vertices in the \graph{x}, where $x \in \{2, 3, 4, 5\}$, such that $w \in C_0^u$ and $v \in \canon{u}{w}$, to the left of $w$. Let $a$ be the intersection of the side of $\canon{u}{w}$ opposite to $u$ with the left boundary of $C_0^v$. Let $C_i^v$ denote the cone of $v$ that contains $w$ and let $c$ and $d$ be the upper and lower corner of $\canon{v}{w}$. If $1 \leq i \leq k-1$, or $i = k$ and $|c w| \leq |d w|$, then $\max \left\{|v c| + |c w|, |v d| + |d w|\right\} \leq |v a| + |a w|$ and $\max \left\{|c w|, |d w|\right\} \leq |a w|$.
\end{lemma}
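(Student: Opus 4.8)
The plan is to set up coordinates with $v$ at the origin and the bisector of $C_0^v$ pointing vertically upward, so that the boundaries of cone $C_j^v$ make angles $(j\pm\tfrac12)\theta$ with the upward vertical (measured clockwise). Under the hypotheses $w$ lies up and to the right of $v$, inside $\canon{v}{w}$ with $w \in C_i^v$ and $i\ge 1$; I write $\psi\in[-\theta/2,\theta/2]$ for the signed angle of $vw$ from the bisector of $C_i^v$ (positive toward the lower corner $d$) and $p$ for the length of the projection of $vw$ onto that bisector. First I would record closed forms for every length involved:
\begin{align*}
|vc| = |vd| = \frac{p}{\cos(\theta/2)}, && |cw| = p\bigl(\tan(\theta/2)+\tan\psi\bigr), && |dw| = p\bigl(\tan(\theta/2)-\tan\psi\bigr),
\end{align*}
and, since $a$ is the point of the left boundary of $C_0^v$ at the height of $w$,
\begin{equation*}
|va| = \frac{w_y}{\cos(\theta/2)}, \qquad |aw| = w_x + w_y\tan(\theta/2),
\end{equation*}
where $w_x = \tfrac{p}{\cos\psi}\sin(i\theta+\psi)$ and $w_y = \tfrac{p}{\cos\psi}\cos(i\theta+\psi)$.

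For the second inequality I would divide through by $p$ and simplify: after product-to-sum manipulations and division by the positive factors $\sin(i\theta/2)$ and $\sin(i\theta/2+\psi)$, the claims $|cw|\le|aw|$ and $|dw|\le|aw|$ collapse to statements of the form $\tan(i\theta/2+\psi)\tan(\theta/2)\le 1$ and $\tan(i\theta/2)\tan(\theta/2)\le 1$. Both are implied by $(i+2)\theta\le\pi$, which $i\le k$ guarantees for every $x\in\{2,3,4,5\}$ and $k\ge1$. This settles the second inequality.

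For the first inequality I normalise in the same way; writing $A=(1+\sin(\theta/2))/\cos(\theta/2)$, the two candidate paths become $|vc|+|cw|=p(A+\tan\psi)$ and $|vd|+|dw|=p(A-\tan\psi)$, so that the maximum normalises to $A\cos|\psi|+\sin|\psi|$ while $|va|+|aw|$ normalises to $A\cos(i\theta+\psi)+\sin(i\theta+\psi)$. Introducing $g(\phi)=A\cos\phi+\sin\phi=\sqrt{A^2+1}\,\cos(\phi-\phi^*)$ with $\phi^*=\arctan(1/A)<\pi/4$, the entire inequality reduces to the single clean statement
\begin{equation*}
g\bigl(|\psi|\bigr) \le g\bigl(i\theta+\psi\bigr), \qquad\text{equivalently}\qquad \bigl|\,i\theta+\psi-\phi^*\,\bigr| \le \bigl|\,|\psi|-\phi^*\,\bigr|.
\end{equation*}

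The hard part is verifying this last inequality over the admissible range of $(i,\psi)$, and this is exactly where the cone-index restriction enters: as $i$ grows toward $k$ the angle $i\theta+\psi$ approaches $\pi/2$ and drifts away from the peak $\phi^*$ of $g$, so the inequality is tightest at $i=k$. I would therefore treat $1\le i\le k-1$ and $i=k$ separately, bounding $i\theta+\psi$ against $\phi^*$ with $(i+2)\theta$-type estimates. In the boundary case $i=k$ the extra hypothesis $|cw|\le|dw|$ is precisely the condition $\psi\le 0$, which both keeps $i\theta+\psi$ small enough and places $|\psi|=-\psi$ on the favourable side of $\phi^*$ so that the arc-length comparison holds. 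I expect this boundary analysis — confirming that $\psi\le 0$ supplies exactly the slack needed at $i=k$ — to be the main obstacle; the remaining steps are routine bookkeeping with the closed forms above.
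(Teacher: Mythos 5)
You should know at the outset that this paper never proves Lemma~\ref{lem:ApplyFourPoints} itself; it is imported from Bose~\etal~\cite{BCMRV15} (their Lemma~3), so there is no in-paper proof to compare against and your argument has to stand on its own. Its completed parts do stand: I checked your closed forms for $|vc|$, $|vd|$, $|cw|$, $|dw|$, $|va|$, $|aw|$, the normalization of the first claim to $g(|\psi|)\le g(i\theta+\psi)$ with $g(\phi)=A\cos\phi+\sin\phi$, and your treatment of the second claim, which indeed reduces by product-to-sum manipulation to $\sin(\theta/2+|\psi|)\le\sin(\theta/2+i\theta+\psi)$ and holds under $(i+2)\theta\le\pi$; all of that is correct. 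The genuine gap is that the first claim --- the main statement of the lemma --- is never actually proven: after reducing it to $\bigl|\,i\theta+\psi-\phi^*\,\bigr|\le\bigl|\,|\psi|-\phi^*\,\bigr|$, you stop, describe what you \emph{would} do, and label the verification the ``main obstacle.'' A proof that ends just before the step its author identifies as the hard part is a plan, not a proof, and everything before that point is bookkeeping that any reader could reproduce.

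The irony is that the missing step is elementary once you compute $\phi^*$ in closed form, which you never do. From $\tan\phi^*=1/A=\cos(\theta/2)/(1+\sin(\theta/2))=\tan(\pi/4-\theta/4)$ you get $\phi^*=\pi/4-\theta/4$. Since $|\psi|\le\theta/2\le\phi^*$ (using $\theta\le\pi/3$ for every graph covered by the lemma), the right-hand side equals $\phi^*-|\psi|$, and the target inequality unwinds into two linear constraints: $i\theta+\psi\ge|\psi|$ and $i\theta+\psi+|\psi|\le 2\phi^*=\pi/2-\theta/2$. The first holds because $2|\psi|\le\theta\le i\theta$. The second is where the hypothesis enters: for $\psi\le 0$ it reads $(2i+1)\theta\le\pi$, true for all $i\le k$ since $(2k+1)\theta=(4k+2)\pi/(4k+x)\le\pi$ when $x\ge 2$; for $\psi>0$ it reads, at worst ($\psi=\theta/2$), $(2i+3)\theta\le\pi$, which is exactly why $i\le k-1$ is required in that case. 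This also corrects a misreading in your sketch: the point of the condition $|cw|\le|dw|$ (i.e.\ $\psi\le 0$) at $i=k$ is not that $|\psi|$ lands ``on the favourable side of $\phi^*$'' --- one has $|\psi|\le\phi^*$ regardless of sign --- but that it deletes the $2\psi$ term from the binding constraint, relaxing $(2i+3)\theta\le\pi$ to $(2i+1)\theta\le\pi$.
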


\begin{figure}[ht]
  \center
  \begin{minipage}[b]{0.45\textwidth}
    \begin{center}
      \includegraphics{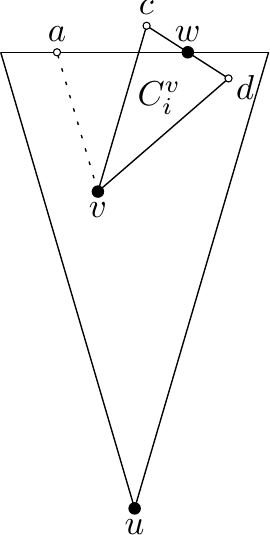}
    \end{center}
    \caption{The situation where we apply Lemma~\ref{lem:ApplyFourPoints}.}
    \label{fig:ApplyFourPoints}
  \end{minipage}
  \hspace{0.05\linewidth}
  \begin{minipage}[b]{0.45\textwidth}
    \begin{center}
      \includegraphics{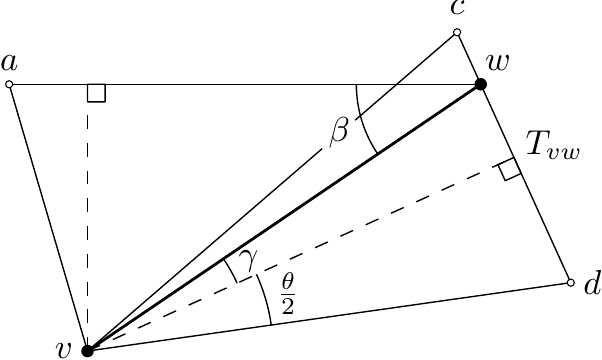}
    \end{center}
    \caption{The situation where we apply Lemma~\ref{lem:CalculationCase}.}
    \label{fig:CalculationLemma}
  \end{minipage}
\end{figure}

\begin{lemma}
  \label{lem:CalculationCase}
  \textbf{\em(Bose~\etal~\cite{BCMRV15}, Lemma~4)} Let $u$, $v$ and $w$ be three vertices in the \graph{x}, where $x \in \{2, 3, 4, 5\}$, such that $w \in C_0^u$, $v \in \canon{u}{w}$ to the left of $w$, and $w \not \in C_0^v$. Let $a$ be the intersection of the side of $\canon{u}{w}$ opposite to $u$ with the left boundary of $C_0^v$. Let $c$ and $d$ be the corners of $\canon{v}{w}$ opposite to $v$. Let $\beta = \angle a w v$ and let $\gamma$ be the unsigned angle between $v w$ and the bisector of \canon{v}{w}. Let \const be a positive constant. If \[
  \const \geq \frac{\cos \gamma - \sin \beta}{\cos \left( \frac{\theta}{2} - \beta \right) - \sin \left( \frac{\theta}{2} + \gamma \right)},\]
  then 
  \[ \max \left\{|v c| + \const \cdot |c w|, |v d| + \const \cdot |d w|\right\} \leq |v a| + \const \cdot |a w|.\]
\end{lemma}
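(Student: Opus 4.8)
The plan is to reduce the geometric inequality to a single trigonometric inequality in $\beta$, $\gamma$, and $\theta$ by writing each of the six lengths in closed form, and then to verify that the stated bound on \const is exactly what that inequality demands. Since the inequality is homogeneous in the lengths, I first normalize $|vw| = 1$, after which everything scales with this.

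For the left-hand side I would exploit that $\canon{v}{w}$ is isosceles with apex angle $\theta$ at $v$: its two equal sides have length $|vc| = |vd| = \cos\gamma/\cos(\theta/2)$, while the two base segments from $w$ have lengths $\sin(\theta/2 - \gamma)/\cos(\theta/2)$ and $\sin(\theta/2 + \gamma)/\cos(\theta/2)$ (the near and far corner, with $\gamma$ measured from the bisector). Because $|vc| = |vd|$ and \const is positive, the maximum $\max\{|vc| + \const|cw|,\, |vd| + \const|dw|\}$ is attained at the far corner, so the left-hand side equals $[\cos\gamma + \const\sin(\theta/2 + \gamma)]/\cos(\theta/2)$.

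For the right-hand side I analyze the triangle $vaw$. The segment $aw$ lies on the side of $\canon{u}{w}$ opposite $u$, hence is horizontal, and $va$ lies on the left boundary of $C_0^v$, which makes angle $\theta/2$ with the vertical; therefore $\angle vaw = \pi/2 - \theta/2$, and together with $\angle awv = \beta$ the third angle is $\pi/2 + \theta/2 - \beta$. The law of sines then gives $|va| = \sin\beta/\cos(\theta/2)$ and $|aw| = \cos(\theta/2 - \beta)/\cos(\theta/2)$, so the right-hand side equals $[\sin\beta + \const\cos(\theta/2 - \beta)]/\cos(\theta/2)$. Substituting both sides and clearing the common positive factor $\cos(\theta/2)$, the claim becomes
\[ \cos\gamma - \sin\beta \;\le\; \const\left(\cos(\theta/2 - \beta) - \sin(\theta/2 + \gamma)\right). \]

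The final step is to divide by the bracketed quantity $D = \cos(\theta/2 - \beta) - \sin(\theta/2 + \gamma)$ to recover precisely the hypothesis $\const \ge (\cos\gamma - \sin\beta)/D$; this division is legitimate only if $D > 0$, and establishing this sign is the one genuinely delicate point, since a wrong sign would reverse the inequality and the hypothesis would no longer yield the conclusion. I would prove $D > 0$ by relating $\beta$ to the angle $\phi$ that $vw$ makes with the vertical: since $aw$ is horizontal one computes $\beta = \pi/2 - \phi$, whence $\cos(\theta/2 - \beta) = \sin(\theta/2 + \phi)$ and $D = \sin(\theta/2 + \phi) - \sin(\theta/2 + \gamma)$. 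The hypothesis $w \notin C_0^v$ forces $\phi > \theta/2 > \gamma$, while $v \in \canon{u}{w}$ places $v$ below $w$ and hence keeps $\phi < \pi/2$; over this range one checks $\sin(\theta/2 + \phi) > \sin\theta$, so $D > \sin\theta - \sin(\theta/2 + \gamma) > 0$ whenever $\theta < \pi/3$. With $D > 0$ confirmed, the displayed inequality is equivalent to the assumed lower bound on \const, which completes the proof.
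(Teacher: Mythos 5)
Your proof is correct. One thing to note up front: this paper never actually proves Lemma~\ref{lem:CalculationCase} --- it is imported from Bose~\etal~\cite{BCMRV15} (their Lemma~4), and the text explicitly defers to that source --- so there is no in-paper argument to compare against; what you have produced is a self-contained verification of the cited result. The computation itself is sound: after normalizing $|vw|=1$, the isosceles structure of \canon{v}{w} gives $|vc|=|vd|=\cos\gamma/\cos(\theta/2)$ with base segments $\sin(\theta/2-\gamma)/\cos(\theta/2)$ and $\sin(\theta/2+\gamma)/\cos(\theta/2)$, so the left-hand side equals $(\cos\gamma+\const\cdot\sin(\theta/2+\gamma))/\cos(\theta/2)$; the law of sines in triangle $vaw$ with $\angle vaw=\pi/2-\theta/2$ gives the right-hand side $(\sin\beta+\const\cdot\cos(\theta/2-\beta))/\cos(\theta/2)$; and the claim reduces to $\cos\gamma-\sin\beta\le\const\cdot(\cos(\theta/2-\beta)-\sin(\theta/2+\gamma))$, which is exactly the hypothesis once the denominator $D=\cos(\theta/2-\beta)-\sin(\theta/2+\gamma)$ is known to be positive. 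You correctly identify the sign of $D$ as the crux, and your argument for it works: $w\notin C_0^v$ together with $v$ lying strictly below the line through $w$ gives $\theta/2<\phi<\pi/2$ for the angle $\phi=\pi/2-\beta$ between $vw$ and the vertical, hence $\cos(\theta/2-\beta)=\sin(\theta/2+\phi)>\sin\theta$, while $\theta/2+\gamma\le\theta<\pi/2$ gives $\sin(\theta/2+\gamma)\le\sin\theta$.

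Two minor repairs. First, your chain ``$D>\sin\theta-\sin(\theta/2+\gamma)>0$'' overstates the last inequality: when $\gamma=\theta/2$ it is an equality (excluded by the general position assumption, but you do not need that --- the strict inequality $\sin(\theta/2+\phi)>\sin\theta$ already yields $D>0$ with only ``$\ge 0$'' in the second step). Second, you restrict to $\theta<\pi/3$, but the lemma as stated includes the \graph{2} with $k=1$, i.e.\ $\theta=\pi/3$; your argument survives this boundary case, since the infimum of $\sin$ over the open interval $(\theta,\,\pi/2+\theta/2)$ is still $\sin\theta$ when $\cos(\theta/2)=\sin\theta$, so $D>0$ holds for all $\theta\le\pi/3$. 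Within this paper the lemma is only invoked for the ordered \graph{4}, where $\theta\le\pi/4$, so neither point affects anything downstream.
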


\section{The Ordered $\boldsymbol{\theta_{(4 k + 4)}}$-Graph}
\label{sec:Ordered4k+4}
In this section, we give tight bounds on the spanning ratio of the ordered \graph{4}, for any integer $k \geq 1$. We start by improving the upper bounds. 

\begin{theorem}
  \label{theo:PathLengthOrdered4k+4}
  Let $u$ and $w$ be two vertices in the plane such that $w$ was inserted before $u$. Let $m$ be the midpoint of the side of \canon{u}{w} opposite $u$ and let $\alpha$ be the unsigned angle between $u w$ and $u m$. There exists a path connecting $u$ and $w$ in the ordered \graph{4} of length at most 
  \[\left( \frac{\cos \alpha}{\cos \left(\frac{\theta}{2}\right)} + \const \cdot \left(\cos \alpha \cdot \tan \left(\frac{\theta}{2}\right) + \sin \alpha\right) \right) \cdot |u w|,\] where \const equals $1 / (\cos (\theta/2) - \sin (\theta/2))$.
\end{theorem}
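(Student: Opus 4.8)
The plan is to build the path edge by edge, starting at $u$ and working toward $w$, and to argue by induction. By the rotational and reflective symmetry of the cone structure I may assume without loss of generality that $w \in C_0^u$ and that $w$ lies weakly to the left within the side of \canon{u}{w} opposite $u$; throughout write $\const = 1/(\cos(\theta/2) - \sin(\theta/2))$. The observation that makes the induction well founded in the \emph{ordered} setting is that I induct on the insertion rank of the later-inserted of the two endpoints (here $u$): every recursive call will connect a pair both of whose endpoints were inserted strictly before $u$, so this rank strictly decreases and the recursion terminates. For the base case, if $uw$ is an edge the path is the single segment $uw$; since $\alpha \le \theta/2$ gives $\cos\alpha \ge \cos(\theta/2)$ and the second summand of the bracket is nonnegative, the claimed coefficient is at least $1$, so $|uw|$ satisfies the bound.

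For the inductive step, let $v$ be the closest previously-inserted vertex in $C_0^u$, which exists because $w$ itself precedes $u$. If $v = w$ we are in the base case; otherwise $v$ lies strictly inside \canon{u}{w}, and, crucially, both $v$ (as a vertex inserted before $u$) and $w$ (by hypothesis) precede $u$. Hence the induction hypothesis applies to the pair $\{v, w\}$ and yields a path between them bounded by the theorem applied to whichever of \canon{v}{w} or \canon{w}{v} has the later-inserted vertex as apex. I then bound the total length $|uv|$ plus this recursive length and show it does not exceed the target. To do so it is convenient to rewrite the target: a direct trigonometric computation in \canon{u}{w} shows that $\cos\alpha / \cos(\theta/2)\cdot|uw|$ is exactly the distance from $u$ to either corner of \canon{u}{w} opposite $u$, while $(\cos\alpha\tan(\theta/2) + \sin\alpha)\cdot|uw|$ is the larger of the two distances from those corners to $w$; thus the target equals this apex-to-corner distance plus $\const$ times that larger corner-to-$w$ distance.

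The geometric heart is to show that attaching the edge $uv$ to the recursive path does not increase this corner potential. Let $C_i^v$ be the cone of $v$ containing $w$, let $c$ and $d$ be the corners of \canon{v}{w} opposite $v$, and let $a$ be the intersection of the side of \canon{u}{w} opposite $u$ with the left boundary of $C_0^v$. When $w$ lies in an upward cone of $v$ with $1 \le i \le k-1$, or $i = k$ with $|cw| \le |dw|$, Lemma~\ref{lem:ApplyFourPoints} bounds $\max\{|vc| + \const\cdot|cw|,\ |vd| + \const\cdot|dw|\}$ by $|va| + \const\cdot|aw|$; when instead $w \notin C_0^v$, I invoke Lemma~\ref{lem:CalculationCase}, verifying that its threshold is met by our specific $\const$ for every admissible cone index. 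In either case it then remains to check the single-triangle inequality $|uv| + |va| + \const\cdot|aw| \le (\text{apex-to-corner}) + \const\cdot(\text{larger corner-to-}w)$ inside \canon{u}{w}; this follows from elementary trigonometry, using that $va$ is parallel to the left boundary of $C_0^u$ so that $|uv| + |va|$ telescopes against the distance from $u$ to a corner of \canon{u}{w}.

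I expect the main obstacle to be the recursive call in the case where $v$ was inserted \emph{before} $w$: then the induction is applied with $w$ as the later-inserted apex, so the governing canonical triangle \canon{w}{v} opens \emph{downward} from $w$ rather than upward from $v$, and I must show this does not enlarge the corner potential relative to \canon{u}{w}. This is precisely where the parity of the cone count is decisive. For $4k+4$ cones the cone of $w$ containing $v$ is antipodal to $C_0^u$, with bounding rays parallel to those of $C_0^u$, so the downward recursion stays controlled; for an odd number of cones no cone bisector points straight down, the corresponding region is skewed, and the triangle one recurses on can become larger, which is exactly why the argument fails for the \graph{3} and \graph{5}. The remaining delicate work is the uniform verification that $\const$ meets the Lemma~\ref{lem:CalculationCase} threshold across all cone indices $i$ with $1 \le i \le k$ — the step in which both the value of $\const$ and the count $4k+4$ are pinned down.
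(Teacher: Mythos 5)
Your proposal is correct and its skeleton is the paper's: the same strengthened inductive claim (path length at most $\max\{|ua| + \const \cdot |aw|,\, |ub| + \const \cdot |bw|\}$ over the corners of \canon{u}{w}), the same decomposition via the first edge $u v$ to the vertex $u$ connected to in $C_0^u$, the same case analysis through Lemma~\ref{lem:ApplyFourPoints} and Lemma~\ref{lem:CalculationCase}, and the same diagnosis of why parity of the cone count saves the ordered argument. The one genuinely different choice is the induction measure. The paper orders \emph{all} canonical triangles \canon{x}{y} (with $y$ inserted before $x$) by area and inducts on that rank, which obliges it to argue that \canon{v}{w} has smaller area than \canon{u}{w} in every case; you induct on the insertion rank of the later-inserted endpoint, which is well founded for free in the ordered setting, since both $v$ and $w$ precede $u$ and hence the later of them has strictly smaller rank than $u$. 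That is a small but real simplification, and it is available precisely because the graph is ordered. Three points need tightening before your sketch is a proof. First, your claim that ``the cone of $w$ containing $v$ is antipodal to $C_0^u$'' is only true when $w \in C_0^v$; in general it is antipodal to $C_i^v$, the cone of $v$ containing $w$. What evenness actually buys --- and what you must state in place of ``stays controlled'' --- is that the parallel bounding rays make $v c w c'$ and $v d w d'$ parallelograms (with $c, d$ the corners of \canon{v}{w} and $c', d'$ those of \canon{w}{v}), so that $|vc| + \const \cdot |cw| = |wc'| + \const \cdot |c'v|$ and $|vd| + \const \cdot |dw| = |wd'| + \const \cdot |d'v|$: the two corner potentials are \emph{equal}, and the recursion on \canon{w}{v} reduces verbatim to the one on \canon{v}{w}. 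Second, the ``uniform verification'' you defer is exactly where the value of \const is pinned down: for $w \in C_k^v$ with $|cw| > |dw|$ one has $\gamma = \theta - \beta$, and the threshold of Lemma~\ref{lem:CalculationCase} is decreasing in $\beta$ on $[\theta/2, \theta]$, hence maximized at $\beta = \theta/2$ where it equals $1/(\cos(\theta/2) - \sin(\theta/2))$; for $w \in C_{k+1}^v$ one has $\gamma = \beta$ and the threshold equals that same value identically. Third, the easy case $w \in C_0^v$ is not covered by either lemma; there the left corner of \canon{v}{w} coincides with the point $a$ of your notation, so the inductive hypothesis applied to \canon{v}{w} already gives $\delta(v,w) \leq |va| + \const \cdot |aw|$ directly. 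None of these is a conceptual obstacle, but all three are needed to close the argument.
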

\begin{proof}
  We assume without loss of generality that $w \in C_0^u$ and $w$ lies on or to the right of the bisector of $C_0^u$. We prove the theorem by induction on the rank, when ordered by area (ties are broken arbitrarily), of the canonical triangles \canon{x}{y} for all pairs of vertices where $y$ was inserted before $x$. Let $a$ and $b$ be the upper left and right corners of \canon{u}{w}. Our inductive hypothesis is $\delta(u, w) \leq \max\{|u a| + \const \cdot |a w|, |u b| + \const \cdot |b w|\}$, where $\delta(u,w)$ denotes the length of the shortest path from $u$ to $w$ in the ordered \graph{4} and \const equals $1 / (\cos (\theta/2) - \sin (\theta/2))$. 

  We first show that this induction hypothesis implies the theorem. Basic trigonometry gives us the following equalities: $|u m| = |u w| \cdot \cos \alpha$, $|m w| = |u w| \cdot \sin \alpha$, $|a m| = |b m| = |u w| \cdot \cos \alpha \cdot \tan (\theta/2)$, and $|u a| = |u b| = |u w| \cdot \cos \alpha / \cos (\theta/2)$. Thus the induction hypothesis gives that $\delta(u, w)$ is at most $|u w| \cdot (\cos \alpha / \cos (\theta/2) + \const \cdot (\cos \alpha \cdot \tan (\theta/2) + \sin \alpha))$. 

  \textbf{Base case:} \canon{u}{w} has rank 1. Since this triangle is a smallest triangle where $w$ was inserted before $u$, it is empty: if it is not empty, let $x$ be a vertex in \canon{u}{w}. Since \canon{u}{x} and \canon{x}{u} are both smaller than \canon{u}{w}, the existence of $x$ contradicts that \canon{u}{w} is the smallest triangle where $w$ was inserted before $u$. Since \canon{u}{w} is empty, $w$ is the closest vertex to $u$ in $C_0^u$. Hence, since $w$ was inserted before $u$, $u$ adds an edge to $w$ when it is inserted. Therefore, the edge $u w$ is part of the ordered \graph{4}, and $\delta(u, w) = |u w|$. From the triangle inequality and the fact that $\const \geq 1$, we have $|u w| \leq \max\{|u a| + \const \cdot |a w|, |u b| + \const \cdot |b w|\}$, so the induction hypothesis holds.

  \textbf{Induction step:} We assume that the induction hypothesis holds for all pairs of vertices with canonical triangles of rank up to $j$. Let \canon{u}{w} be a canonical triangle of rank $j+1$.

  If $u w$ is an edge in the ordered \graph{4}, the induction hypothesis follows by the same argument as in the base case. If there is no edge between $u$ and $w$, let $v$ be the vertex in \canon{u}{w} that $u$ connected to when it was inserted, let $a'$ and $b'$ be the upper left and right corners of \canon{u}{v}, and let $a''$ be the intersection of the side of $\canon{u}{w}$ opposite $u$ and the left boundary of $C_0^v$ (see Figure~\ref{fig:TriangleCasesOrdered4k+4}). 

  \begin{figure}[ht]
    \begin{center}
      \includegraphics{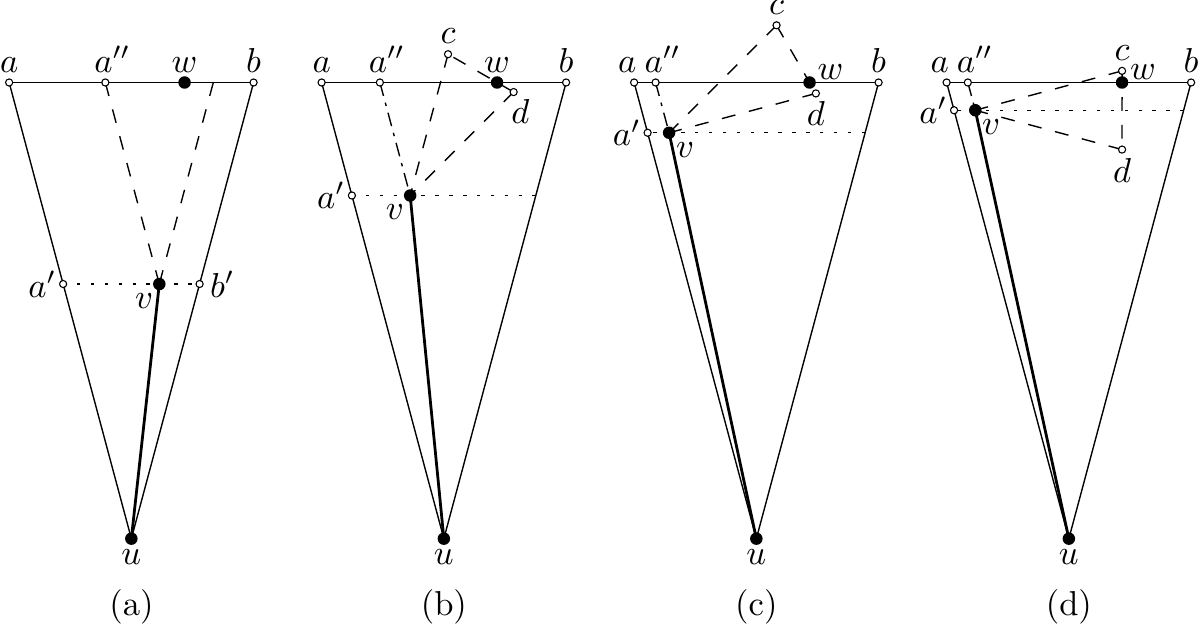}
    \end{center}
    \caption{The four cases based on the cone of $v$ that contains $w$.}
    \label{fig:TriangleCasesOrdered4k+4}
  \end{figure}

  We need to perform case distinction on whether $w$ was inserted before or after $v$, to determine whether we can apply induction on \canon{v}{w} or \canon{w}{v}. Let $c$ and $d$ be the left and right corners of \canon{v}{w} and let $c'$ and $d'$ be the left and right corner of \canon{w}{v}. We note that since the ordered \graph{4} has an even number of cones, $v c w c'$ and $v d w d'$ form two parallelograms. Thus, we have that $|v c| + \const \cdot |c w| = |w c'| + \const \cdot |c' v|$ and $|v d| + \const \cdot |d w| = |w d'| + \const \cdot |d' v|$. Hence, we can assume without loss of generality that the canonical triangle we need to look at is \canon{v}{w}. 

  Without loss of generality, we assume that $v$ lies to the left of or has the same $x$-coordinate as $w$; The case where $v$ lies to the right of $w$ is symmetric to this case. Since we need to show that $\delta(u, w) \leq \max\{|u a| + \const \cdot |a w|, |u b| + \const \cdot |b w|\}$, it suffices to show that $\delta(u, w) \leq |u a| + \const \cdot |a w|$. We perform a case analysis based on the cone of $v$ that contains $w$: (a) $w \in C_0^v$, (b) $w \in C_i^v$ where $1 \leq i \leq k-1$, or $i = k$ and $|c w| \leq |d w|$, (c) $w \in C_k^v$ and $|c w| > |d w|$, (d) $w \in C_{k+1}^v$. To prove that $\delta(u, w) \leq |u a| + \const \cdot |a w|$, it suffices to show that $\delta(v, w) \leq |v a''| + \const \cdot |a'' w|$, as $|u v| \leq |u a'| + \const \cdot |a' v|$ and $v$, $a''$, $a$, and $a'$ form a parallelogram (see Figure~\ref{fig:TriangleCasesOrdered4k+4}).

  \textbf{Case (a):} Vertex $w$ lies in $C_0^v$ (see Figure~\ref{fig:TriangleCasesOrdered4k+4}a). Since \canon{v}{w} has smaller area than \canon{u}{w}, we apply the inductive hypothesis to \canon{v}{w}. Since $v$ lies to the left of or has the same $x$-coordinate as $w$, we have $\delta(v, w) \leq |v a''| + \const \cdot |a'' w|$. 

  \textbf{Case (b):} Vertex $w$ lies in $C_i^v$, where $1 \leq i \leq k-1$, or $i = k$ and $|c w| \leq |d w|$.  Since \canon{v}{w} is smaller than \canon{u}{w}, by induction we have $\delta(v, w) \leq \max\{|v c| + \const \cdot |c w|, |v d| + \const \cdot |d w|\}$ (see Figure~\ref{fig:TriangleCasesOrdered4k+4}b). Since $w \in C_i^v$ where $1 \leq i \leq k-1$, or $i = k$ and $|c w| \leq |d w|$, we can apply Lemma~\ref{lem:ApplyFourPoints}. Note that point $a$ in Lemma~\ref{lem:ApplyFourPoints} corresponds to point $a''$ in this proof. Hence, we get that $\max\left\{|vc| + |cw|, |vd| + |dw|\right\} \leq |va''| + |a''w|$ and $\max\left\{|cw|, |dw|\right\} \leq |a''w|$. Since $\const \geq 1$, this implies that $\max\left\{|vc| + \const \cdot |cw|, |vd| + \const \cdot |dw|\right\} \leq |v a''| + \const \cdot |a'' w|$.

  \textbf{Case (c)} Vertex $w$ lies in $C_k^v$ and $|c w| > |d w|$. Since \canon{v}{w} is smaller than \canon{u}{w} and $|c w| > |d w|$, the induction hypothesis for \canon{v}{w} gives $\delta(v, w) \leq |v c| + \const \cdot |c w|$ (see Figure~\ref{fig:TriangleCasesOrdered4k+4}c). Let $\beta$ be $\angle a'' w v$ and let $\gamma$ be the angle between $v w$ and the bisector of \canon{v}{w}. We note that $\gamma = \theta - \beta$. Hence Lemma~\ref{lem:CalculationCase} gives that $|v c| + \const \cdot |c w| \leq |v a''| + \const \cdot |a'' w|$ holds when $\const \geq (\cos (\theta - \beta) - \sin \beta) / (\cos (\theta/2 - \beta) - \sin (3\theta/2 - \beta))$. As this function is decreasing in $\beta$ for $\theta/2 \leq \beta \leq \theta$, it is maximized when $\beta$ equals $\theta/2$. Hence $\const$ needs to be at least $(\cos (\theta/2) - \sin (\theta/2)) / (1 - \sin \theta)$, which can be rewritten to $1 / (\cos (\theta/2) - \sin (\theta/2))$. 

  \textbf{Case (d)} Vertex $w$ lies in $C_{k+1}^v$ (see Figure~\ref{fig:TriangleCasesOrdered4k+4}d). Since \canon{v}{w} is smaller than \canon{u}{w}, we can apply induction on it. Since $w$ lies above the bisector of \canon{v}{w}, the induction hypothesis for \canon{v}{w} gives $\delta(v, w) \leq |v d| + \const \cdot |d w|$. Let $\beta$ be $\angle a'' w v$ and let $\gamma$ be the angle between $v w$ and the bisector of \canon{v}{w}. We note that $\gamma = \beta$. Hence Lemma~\ref{lem:CalculationCase} gives that $|v d| + \const \cdot |d w| \leq |v a''| + \const \cdot |a'' w|$ holds when $\const \geq (\cos \beta - \sin \beta) / (\cos (\theta/2 - \beta) - \sin (\theta/2 + \beta))$, which is equal to $1 / (\cos (\theta/2) - \sin (\theta/2))$.   
\end{proof}

Since $\cos \alpha / \cos (\theta/2) + (\cos \alpha \cdot \tan (\theta/2) + \sin \alpha) / (\cos (\theta/2) - \sin (\theta/2))$ is increasing for $\alpha \in [0, \theta/2]$, for $\theta \leq \pi/4$, it is maximized when $\alpha = \theta/2$, and we obtain the following corollary: 

\begin{corollary}
  \label{cor:SpanningRatioOrdered4k+4}
  The ordered \graph{4} $(k \geq 1)$ is a $\left( 1 + \frac{2 \sin \left( \frac{\theta}{2} \right)}{\cos \left( \frac{\theta}{2} \right) - \sin \left( \frac{\theta}{2} \right)} \right)$-spanner. 
\end{corollary}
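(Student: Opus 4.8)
The plan is to read off the corollary directly from Theorem~\ref{theo:PathLengthOrdered4k+4} by maximizing the per-pair bound over the admissible range of the angle $\alpha$. Since the underlying graph is the complete Euclidean graph, $\delta_G(u,w) = |uw|$, so the spanning ratio equals the supremum of $\delta(u,w)/|uw|$ over all pairs of vertices; for any such pair, one of $u,w$ was inserted before the other, so Theorem~\ref{theo:PathLengthOrdered4k+4} applies to it. Writing
\[ f(\alpha) = \frac{\cos\alpha}{\cos(\theta/2)} + \const\cdot\bigl(\cos\alpha\cdot\tan(\theta/2) + \sin\alpha\bigr), \]
the theorem states $\delta(u,w) \leq f(\alpha)\cdot|uw|$, where $\alpha$ is the angle determined by that pair. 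Hence it suffices to bound $f(\alpha)$ uniformly from above. Because $w$ lies in a single cone $C_0^u$ of aperture $\theta$, the angle $\alpha$ between $uw$ and the bisector $um$ is at most $\theta/2$, so $\alpha$ ranges over $[0,\theta/2]$. I would also record that for every $k \geq 1$ we have $\theta = 2\pi/(4k+4) \leq \pi/4$, so the hypothesis $\theta \leq \pi/4$ is automatically satisfied.

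The main work is to show that $f$ is nondecreasing on $[0,\theta/2]$, so that its maximum is attained at the right endpoint $\alpha = \theta/2$. Differentiating gives
\[ f'(\alpha) = \const\cos\alpha - \Bigl(\tfrac{1}{\cos(\theta/2)} + \const\tan(\theta/2)\Bigr)\sin\alpha, \]
which has the form $B\cos\alpha - A\sin\alpha$ with $A,B > 0$. On $[0,\pi/2)$ this is strictly decreasing and starts positive (since $f'(0) = \const > 0$), so it changes sign at most once; thus it suffices to check the single inequality $f'(\theta/2) \geq 0$, equivalently $\tan(\theta/2) \leq B/A$. Substituting $\const = 1/(\cos(\theta/2) - \sin(\theta/2))$ and clearing denominators reduces this to an inequality in $\theta/2$ alone. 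I expect this to be the one genuinely delicate step, and I anticipate that it is precisely here that the constraint $\theta \leq \pi/4$ is needed to keep the relevant quantities positive.

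Granting monotonicity, the spanning ratio equals $f(\theta/2)$, which is then a short trigonometric simplification. The first term becomes $\cos(\theta/2)/\cos(\theta/2) = 1$, and using $\cos(\theta/2)\tan(\theta/2) = \sin(\theta/2)$ the bracketed factor collapses to $2\sin(\theta/2)$, so
\[ f(\theta/2) = 1 + \const\cdot 2\sin(\theta/2) = 1 + \frac{2\sin(\theta/2)}{\cos(\theta/2) - \sin(\theta/2)}, \]
which is exactly the claimed bound. In summary, the hardest part is the monotonicity check (and the accompanying use of $\theta \leq \pi/4$); once that is established, the value at the endpoint follows immediately.
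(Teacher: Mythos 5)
Your proposal is correct and takes essentially the same route as the paper, which likewise derives the corollary by observing that the bound of Theorem~\ref{theo:PathLengthOrdered4k+4} is increasing in $\alpha$ on $[0, \theta/2]$ (for $\theta \leq \pi/4$, which holds since $k \geq 1$) and evaluating it at $\alpha = \theta/2$. The one step you deferred, namely $f'(\theta/2) \geq 0$, does go through easily: after multiplying by $\cos(\theta/2)$ it reduces to $\const \cdot \cos\theta \geq \sin(\theta/2)$, i.e.\ $\cos(\theta/2) + \sin(\theta/2) \geq \sin(\theta/2)$, which is trivially true whenever $\cos(\theta/2) > \sin(\theta/2)$, so the monotonicity in fact needs only $\theta < \pi/2$ rather than the stricter bound you anticipated.
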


Next, we provide a matching lower bound on the spanning ratio of the ordered \graph{4}. 

\begin{lemma}
  \label{lem:LowerBound4k+4}
  The ordered \graph{4} $(k \geq 1)$ has spanning ratio at least $1 + \frac{2 \sin \left( \frac{\theta}{2} \right)}{\cos \left( \frac{\theta}{2} \right) - \sin \left( \frac{\theta}{2} \right)}$.
\end{lemma}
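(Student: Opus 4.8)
The goal is to construct, for each fixed $k \geq 1$, a point set together with an insertion order whose resulting ordered \graph{4} realizes (in the limit) the spanning ratio $1 + 2\sin(\theta/2)/(\cos(\theta/2) - \sin(\theta/2))$, thereby matching the upper bound of Corollary~\ref{cor:SpanningRatioOrdered4k+4}. The plan is to build a recursive gadget that forces the shortest $u$--$w$ path to repeatedly hug the boundary of a canonical triangle, charging the worst-case cost of Case~(c)/(d) of the preceding proof at every level of the recursion. First I would place two vertices $u$ and $w$ with $w$ on the right boundary of $C_0^u$, so that $\alpha = \theta/2$, the angle that maximizes the path-length bound. The idea is to prevent the direct edge $uw$ from being added: I insert a vertex $v$ into $\canon{u}{w}$ \emph{before} $u$ but position it so that $w$ falls into the cone $C_k^v$ (the configuration of Case~(c)), and so that $v$ is strictly closer to $u$ along the bisector of $C_0^u$ than $w$ is. Then when $u$ is inserted it connects to $v$ rather than $w$, and the detour through $v$ must be paid.

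The heart of the construction is to choose the geometry so that the angle $\beta = \angle a'' w v$ equals exactly $\theta/2$, the value at which Lemma~\ref{lem:CalculationCase} is tight and the constant $\const = 1/(\cos(\theta/2) - \sin(\theta/2))$ is forced. Concretely I would place $v$ at the apex-side so that $vw$ runs along the appropriate cone boundary, making $\gamma = \theta - \beta = \theta/2$ as well, which is precisely the equality case $\gamma = \theta - \beta$ identified in Case~(c). At this point $|vc| + \const \cdot |cw| = |va''| + \const \cdot |a''w|$ holds with equality rather than inequality, so no slack is lost. Next I would recurse: inside $\canon{v}{w}$ I repeat the same gadget, inserting a vertex $v'$ before $v$ that blocks the edge $vw$ and forces another detour with the same tight angle, and so on for $k$ levels (the bound $k \geq 1$ comes from needing at least one cone of each type for the gadget to embed). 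Summing the detour costs over all levels, together with the final base-case edge, I expect the total path length to approach the claimed bound as the number of levels grows; a geometric-series argument on the successive triangle sizes should give convergence to $1 + 2\sin(\theta/2)/(\cos(\theta/2) - \sin(\theta/2))$ times $|uw|$.

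Two consistency conditions must be checked throughout and constitute the main obstacle. First, the insertion order must be globally realizable: at each recursion level the blocking vertex must be inserted before the vertex whose edge it blocks, while simultaneously all the gadget vertices deeper in the recursion are inserted even earlier; I would verify that these constraints form a consistent partial order (essentially insert innermost vertices first, outermost last) and that no unintended closer vertex appears in any cone to short-circuit a detour. Second, the general position assumption must be respected, so the tight angles $\beta = \gamma = \theta/2$ can only be \emph{approached}, not attained, which is why the result is stated as ``spanning ratio at least'' and realized in the limit by a family of perturbed configurations. The hard part will be confirming that the $k$-fold nesting actually embeds in a \graph{4}: as $k$ grows the cones become narrower ($\theta = 2\pi/(4k+4)$), and I must ensure each recursive copy of the gadget fits inside its parent canonical triangle with $w$ landing in exactly the cone $C_k^{v}$ demanded by Case~(c), rather than slipping into an adjacent cone where a weaker constant would apply. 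Verifying this cone-membership at every scale, and that the closest-vertex-per-cone rule selects exactly the detour edges and no others, is the crux of the argument.
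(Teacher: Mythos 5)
Your high-level strategy --- force detours with blocking vertices that are inserted in reverse order, realize the tight angle of Lemma~\ref{lem:CalculationCase}, and take a limit over an unbounded number of levels --- is the same spirit as the paper's proof (and note the recursion depth is unrelated to $k$; it must grow without bound). However, the specific recursion you propose, namely always nesting the next blocking vertex inside \canon{v^{(i)}}{w} so that every level blocks an edge toward the \emph{fixed} target $w$, does not work, and it breaks exactly at the step you defer as ``verify that no unintended closer vertex appears in any cone.'' A first, repairable, issue: the invariant you want, $w \in C_k^{v^{(i)}}$ (Case~(c)) at every level, is geometrically impossible. Tightness of the leg from $v^{(i)}$ forces $v^{(i+1)}$ to sit near a corner of \canon{v^{(i)}}{w}, and from that corner $w$ lies in the direction perpendicular to the bisector of the cone of $v^{(i)}$ containing $w$; since $\pi/2 = (k+1)\theta$ in a \graph{4}, perpendiculars of bisectors are again bisectors, so $w$ ends up on the bisector of a cone of $v^{(i+1)}$ rotated by $k+1$ positions, not in $C_k^{v^{(i+1)}}$. (Those ``bisector'' configurations happen to be tight as well, which is why this part could be repaired.)

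The fatal issue is the shortcut verification itself. In your construction all blocking vertices converge to $w$, with $|v^{(j)}w|$ shrinking roughly like $\tan^{j-1}(\theta/2)$. When $v^{(i)}$ is inserted, every deeper vertex $v^{(j)}$, $j \geq i+2$, lies in the \emph{same} cone of $v^{(i)}$ that contains $w$ (they are within distance about $\tan^2(\theta/2)\,|v^{(i)}w|$ of $w$, well inside that cone), so they compete with $v^{(i+1)}$ for the single edge of $v^{(i)}$ in that cone. The intended neighbour $v^{(i+1)}$ beats $w$ in projection only by an arbitrarily small $\epsilon$, whereas any deeper vertex on the near side of $w$ is closer by the non-negligible amount $\tan^2(\theta/2)\,|v^{(i)}w|$. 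The closest-vertex rule then inserts $v^{(i)}v^{(i+2)}$ and omits $v^{(i)}v^{(i+1)}$; with these shortcuts the shortest path collapses to length about $\left(1 + 2\sin\left(\frac{\theta}{2}\right)\right)|uw|$, i.e.\ the unordered bound, and the lower bound is lost. Placing the deeper blocking vertices on the far side of $w$ instead does not help: since $w$ hugs the right boundary of $C_0^u$, those vertices fall into $C_1^u$, so $u$ gets a direct edge deep into the gadget --- again a shortcut. Escaping both failure modes requires perturbations that are a constant fraction of the edge lengths, which destroys tightness. The paper avoids all of this by alternating sides instead of recursing toward $w$: $v_1$ (near the left corner of \canon{u}{w}) blocks $u$'s edge to $w$, then $v_2 \in \canon{w}{v_1}$ blocks $w$'s edge to $v_1$, then $v_3 \in \canon{v_1}{v_2}$ blocks $v_1$'s edge to $v_2$, and so on. This zigzag staircase marches \emph{away} from both $u$ and $w$ toward the intersection point $x$ of the two boundary lines, so at insertion time each vertex sees all previously inserted vertices in one cone with its intended neighbour strictly closest; the graph really is a single path, and its length tends to $|ux| + |xw|$, which for $\beta = \theta/2$ equals the claimed ratio times $|uw|$.
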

\begin{proof}
  To prove the lower bound, we first construct a point set, after which we specify the order in which these vertices are inserted into the graph. We fix a vertex $u$ and a vertex $w$ arbitrarily close to the right boundary of $C_0^u$. Next, we place a vertex $v_1$ arbitrarily close to the left corner of \canon{u}{w}, followed by a vertex $v_2$ arbitrarily close to the upper corner of \canon{w}{v_1}. Finally, we repeat the following two steps an arbitrary number of times: we position a vertex $v_i$ arbitrarily close to the left corner of \canon{v_{i-2}}{v_{i-1}}, followed by a vertex $v_{i+1}$ arbitrarily close to the upper corner of \canon{v_{i-1}}{v_i}. Let $v_n$ be the last vertex placed in this fashion. We insert the vertices in the following order: $v_n$, $v_{n-1}$, ..., $v_2$, $v_1$, $w$, $u$. The resulting ordered \graph{4} consists of a single path between $u$ and $w$ and is shown in Figure~\ref{fig:OrderedTheta4k+4}. Note that when a vertex $v$ is inserted, all previously-inserted vertices lie in the same cone of $v$. This ensures that no shortcuts are introduced when inserting $v$. 

  \begin{figure}[ht]
    \begin{center}
      \includegraphics{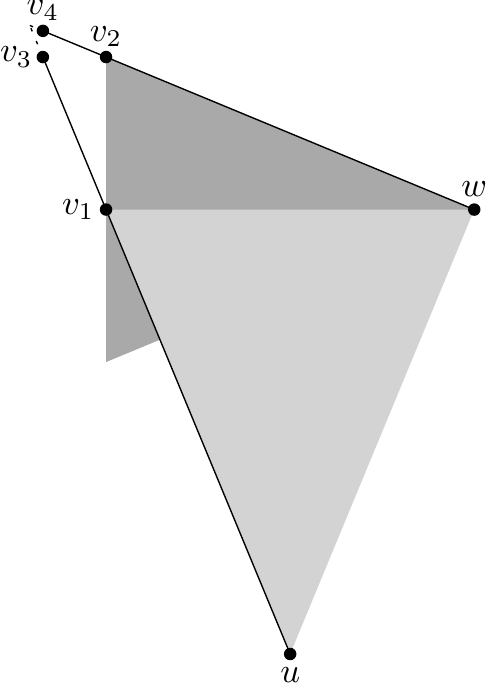}
    \end{center}
    \caption{A lower bound for the ordered \graph{4}.}
    \label{fig:OrderedTheta4k+4}
  \end{figure}

  We note that edges $u v_1$ and edges of the form $v_i v_{i+2}$ (for odd $i \geq 1$) lie on a line. We also note that edges $w v_2$ and edges of the form $v_i v_{i+2}$ (for even $i \geq 2$) lie on a line. Let $x$ be the intersection of these two lines and let $\beta$ be $\angle x w v_1$. Hence, as the number of vertices approaches infinity, the total length of the path approaches $|u x| + |x w|$. Using that $\angle u x w = (\pi - \theta)/2 - \beta$, we compute the following edge lengths: 
  \begin{eqnarray*}
    |u x| &=& |u w|\cdot \frac{\sin \left( \frac{\pi-\theta}{2} + \beta \right)}{\sin \left( \frac{\pi-\theta}{2} - \beta \right)} \\ \\
    |x w| &=& |u w| \cdot \frac{\sin \theta}{\sin \left( \frac{\pi-\theta}{2} - \beta \right)} 
  \end{eqnarray*}

  Since for the ordered \graph{4} $\beta = \theta/2$, the sum of these equalities is $1/(\cos \theta + \tan \theta)$, which can be rewritten to $1 + 2 \sin(\theta/2)/(\cos(\theta/2) - \sin(\theta/2))$.
\end{proof}

\begin{theorem}
  The ordered \graph{4} $(k \geq 1)$ has a tight spanning ratio of $1 + \frac{2 \sin \left( \frac{\theta}{2} \right)}{\cos \left( \frac{\theta}{2} \right) - \sin \left( \frac{\theta}{2} \right)}$. 
\end{theorem}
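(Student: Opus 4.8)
The plan is to recognize that a \emph{tight} spanning ratio is precisely the statement that a matching upper and lower bound coincide, and that both halves have already been established in the preceding material. The proof therefore reduces to combining two results that are in hand, with no new geometric or combinatorial argument required.

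First I would invoke Corollary~\ref{cor:SpanningRatioOrdered4k+4} to supply the upper bound. That corollary asserts that the ordered \graph{4} is a $\left(1 + \frac{2 \sin(\theta/2)}{\cos(\theta/2) - \sin(\theta/2)}\right)$-spanner for every integer $k \geq 1$; it is obtained from the inductive path-length bound of Theorem~\ref{theo:PathLengthOrdered4k+4} by maximizing the expression $\cos\alpha / \cos(\theta/2) + \const \cdot (\cos\alpha \cdot \tan(\theta/2) + \sin\alpha)$ over $\alpha \in [0, \theta/2]$, which attains its maximum at $\alpha = \theta/2$. Consequently the spanning ratio of the ordered \graph{4} is at most the claimed value.

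Second I would invoke Lemma~\ref{lem:LowerBound4k+4} to supply the matching lower bound. That lemma exhibits, for each $k \geq 1$, an explicit point set together with a specific insertion order whose resulting ordered \graph{4} is a single path between two vertices $u$ and $w$; as the number of vertices tends to infinity, the length of this path approaches $\left(1 + \frac{2 \sin(\theta/2)}{\cos(\theta/2) - \sin(\theta/2)}\right) \cdot |u w|$, while $|u w|$ is the length of the shortest path in the underlying complete graph. Hence the spanning ratio of the ordered \graph{4} is at least the claimed value.

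Since the upper bound from Corollary~\ref{cor:SpanningRatioOrdered4k+4} and the lower bound from Lemma~\ref{lem:LowerBound4k+4} are identical, the spanning ratio equals exactly $1 + \frac{2 \sin(\theta/2)}{\cos(\theta/2) - \sin(\theta/2)}$, and this value is tight. There is no real obstacle here: all of the difficulty lives in the two ingredients already proven—the case analysis of the inductive argument (in particular Cases (c) and (d), where Lemma~\ref{lem:CalculationCase} forces the constant \const to be at least $1/(\cos(\theta/2) - \sin(\theta/2))$) and the verification that the lower-bound construction introduces no shortcuts because every previously-inserted vertex lies in a single cone of each newly inserted vertex. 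The only thing that could go wrong in the combination step is a mismatch between the two closed-form expressions, so the one check worth performing explicitly is that the quantity $1/(\cos\theta + \tan\theta)$ appearing at the end of Lemma~\ref{lem:LowerBound4k+4} indeed simplifies to the same closed form as the maximized upper bound; once that algebraic identity is confirmed, the theorem follows immediately.
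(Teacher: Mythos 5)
Your proposal is correct and takes exactly the paper's approach: the theorem appears there with no separate proof precisely because it is the immediate combination of the upper bound in Corollary~\ref{cor:SpanningRatioOrdered4k+4} and the matching lower bound in Lemma~\ref{lem:LowerBound4k+4}, which is what you do. Your one explicit check is also the right one, and it goes through: $\frac{1+\sin\theta}{\cos\theta} = \frac{\cos(\theta/2)+\sin(\theta/2)}{\cos(\theta/2)-\sin(\theta/2)} = 1 + \frac{2\sin(\theta/2)}{\cos(\theta/2)-\sin(\theta/2)}$, so the two closed forms coincide.
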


\section{Lower Bounds}
Next, we provide lower bounds for the ordered \graph{2}, the ordered \graph{3}, and the ordered \graph{5}. For the ordered \graph{2}, this lower bound implies that the current upper bound on the spanning ratio is tight. For the ordered \graph{2} and the ordered \graph{5}, these lower bounds are strictly larger than the upper bound on the worst case spanning ratio of its unordered counterpart. 

\begin{lemma}
  The ordered \graph{2} $(k \geq 2)$ has spanning ratio at least $\frac{1}{1 - 2 \sin(\frac{\theta}{2})}$.
\end{lemma}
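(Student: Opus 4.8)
The plan is to mirror the lower-bound construction used for the ordered \graph{4} in Lemma~\ref{lem:LowerBound4k+4}, changing only the cone bookkeeping to suit the $4k+2$ setting. First I would fix vertices $u$ and $w$ with $w$ arbitrarily close to a boundary of $C_0^u$, and then place a family $v_1, v_2, \ldots, v_n$ by a recursive rule of the form ``$v_i$ near a prescribed corner of $\canon{v_{i-2}}{v_{i-1}}$ and $v_{i+1}$ near a prescribed corner of $\canon{v_{i-1}}{v_i}$,'' where the chosen corners are dictated by the $4k+2$ cone structure. I would then insert the vertices in the order $v_n, v_{n-1}, \ldots, v_1, w, u$. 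The two properties I must extract are the same as in the $4k+4$ case: (i) when each vertex is inserted, all previously-inserted vertices lie in a single one of its cones, so exactly one edge is added and the resulting ordered \graph{2} is a single $u$--$w$ path with no shortcuts; and (ii) the odd-indexed edges $u v_1, v_1 v_3, \ldots$ lie on one line and the even-indexed edges $w v_2, v_2 v_4, \ldots$ lie on a second line, the two lines meeting at a point $x$.

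Granting (i) and (ii), the path length tends to $|ux| + |xw|$ as $n \to \infty$, and I would evaluate this exactly as in Lemma~\ref{lem:LowerBound4k+4}: writing $\beta = \angle x w v_1$ and using $\angle u x w = \frac{\pi-\theta}{2} - \beta$, the law of sines in triangle $uxw$ gives $|ux| = |uw| \cdot \sin\left(\frac{\pi-\theta}{2}+\beta\right) / \sin\left(\frac{\pi-\theta}{2}-\beta\right)$ and $|xw| = |uw| \cdot \sin\theta / \sin\left(\frac{\pi-\theta}{2}-\beta\right)$. The one genuinely new point is the value of $\beta$: because $4k+2$ is not divisible by $4$, no cone bisector is horizontal, which shifts by one the cone of a path vertex that captures the next target and so steepens the even chain; I expect this to produce $\beta = \theta$ here (against $\beta = \theta/2$ for the \graph{4}). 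Substituting $\beta = \theta$ collapses the ratio cleanly: the numerator becomes $\sin\left(\frac{\pi+\theta}{2}\right) + \sin\theta = \cos\left(\frac{\theta}{2}\right)\left(1 + 2\sin\left(\frac{\theta}{2}\right)\right)$, the denominator becomes $\sin\left(\frac{\pi-3\theta}{2}\right) = \cos\left(\frac{3\theta}{2}\right) = \cos\left(\frac{\theta}{2}\right)\left(1 - 4\sin^2\left(\frac{\theta}{2}\right)\right)$, and their quotient is $\frac{1+2\sin(\theta/2)}{1-4\sin^2(\theta/2)} = \frac{1}{1-2\sin(\theta/2)}$, as required. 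This also explains the hypothesis $k \geq 2$: at $k=1$ we have $\theta = \pi/3$, so $\cos\left(\frac{3\theta}{2}\right) = 0$, the two lines are parallel, $x$ escapes to infinity, and the ratio is unbounded --- consistent with the ordered $\theta_6$-graph not being a spanner.

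The main obstacle is the combinatorial half, property~(i): I must choose the corner-placement rule so that at every insertion all earlier vertices fall into a single cone, which simultaneously guarantees that no shortcut edge is ever created and forces the precise placement yielding $\beta = \theta$. This is where the $4k+2$ cone layout has to be checked, tracking for each newly inserted $v_i$ (and then for $w$ and $u$) which cone contains the already-placed vertices, using that the $v_j$ sit arbitrarily close to corners of nested canonical triangles. Once the no-shortcut property and the value $\beta = \theta$ are pinned down, the trigonometric simplification above is routine, and re-deriving the triangle relation $\angle u x w = \frac{\pi-\theta}{2} - \beta$ for the $4k+2$ layout (the odd chain keeps the same slope as in the \graph{4} construction, so only $\beta$ changes) completes the argument.
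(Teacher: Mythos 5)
Your trigonometric half is correct and matches the paper exactly: in the paper's construction the even chain does leave $w$ at angle $\beta = \theta$, the law-of-sines expressions specialize to $|ux| = |uw|\sin\bigl(\tfrac{\pi+\theta}{2}\bigr)/\sin\bigl(\tfrac{\pi-3\theta}{2}\bigr)$ and $|xw| = |uw|\sin\theta/\sin\bigl(\tfrac{\pi-3\theta}{2}\bigr)$, your simplification of the ratio to $1/(1-2\sin(\theta/2))$ is valid, and your explanation of the hypothesis $k \geq 2$ (at $\theta = \pi/3$ the two chains become parallel) is consistent with the paper's separate result that the ordered $\theta_6$-graph is not a spanner. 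The gap is precisely in the half you deferred, and it is not a routine verification: the construction you propose, a verbatim copy of the staircase of Lemma~\ref{lem:LowerBound4k+4} in which only path vertices $v_i$ are placed near corners of $\canon{v_{i-2}}{v_{i-1}}$ and inserted in the order $v_n, \ldots, v_1, w, u$, provably cannot achieve $\beta = \theta$ in the \graph{2}. The obstruction is the horizontal cone boundary (your observation that no bisector is horizontal is the right one, but it cuts against you). For $u$'s unique edge in $C_0^u$ to go to $v_1$ rather than straight to $w$, the vertex $v_1$ must project closer to $u$ along the vertical bisector, i.e., it must lie strictly below the horizontal line through $w$. But then $v_1$ sits in the cone of $w$ just \emph{below} that horizontal boundary, and the upper corner of $\canon{w}{v_1}$ lies \emph{on} the horizontal ray from $w$, essentially straight above $v_1$. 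So $v_2$ lands an infinitesimal distance above $v_1$, every subsequent canonical triangle has infinitesimal size, the whole staircase collapses into an arbitrarily small neighborhood of $v_1$, and the path length degenerates to $|uv_1| + |v_1w|$, giving only the ratio $1 + 2\sin(\theta/2)$ (the unordered bound), strictly weaker than $1/(1-2\sin(\theta/2))$. Placing $v_1$ above the horizontal through $w$ would indeed give $\beta = \theta$, but then $u$ connects directly to $w$ and the construction yields nothing. Under your placement rule, properties (i) and (ii) with $\beta = \theta$ genuinely conflict.

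The paper resolves this conflict by changing the construction itself, not the bookkeeping: each step of the staircase is a configuration of \emph{six} points, two path vertices plus four auxiliary vertices $l_{2i-1}, l_{2i}, r_{2i-1}, r_{2i}$ placed arbitrarily close to path vertices, with an interleaved insertion order ($r_{2i}, r_{2i-1}, v_{2i+1}, l_{2i}, l_{2i-1}, v_{2i}$ per configuration, then $w$, $v_1$, $u$). The role of, for instance, $l_1$ is to act as a proxy for $v_1$ that lies just \emph{above} the horizontal through $w$: the next path vertex $v_2$ is placed at the right corner of $\canon{w}{l_1}$, which sits at angle $\theta$ above horizontal, while $v_1$ itself stays below $w$ and so keeps the edge $uv_1$. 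The other auxiliary vertices play the analogous decoupling role at later steps of the staircase, and because all of them are arbitrarily close to path vertices, the extra edges they force make the resulting graph only ``essentially'' a path without shortening the $u$--$w$ distance. This decoupling of ``the vertex that receives the edge'' from ``the vertex whose canonical triangle positions the next step'' is the idea your proposal is missing, and without it (or an equivalent device) the claimed lower bound does not follow.
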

\begin{proof}
  To prove the lower bound, we first construct a point set, after which we specify the order in which these vertices are inserted into the graph. We fix a vertex $u$ and a vertex $w$ arbitrarily close to the right boundary of $C_0^u$, and we place a vertex $v_1$ arbitrarily close to the upper left corner of \canon{u}{w}. Next, we repeat the following configuration of six points $l_1$, $v_2$, $l_2$, $r_1$, $v_3$, and $r_2$ an arbitrary number of times: position $l_1$ in \canon{v_1}{w} arbitrarily close to $v_1$; $v_2$ lies in the right corner of \canon{w}{l_1}; $l_2$ is close to the right boundary of \canon{v_1}{v_2} arbitrarily close to $v_1$; $r_1$ lies in the intersection of \canon{v_2}{l_2} and $C_0^{l_2}$ arbitrarily close to $v_2$; $v_3$ is placed in the left corner of the intersection of \canon{l_1}{r_1} and \canon{l_2}{r_1}; vertex $r_2$ is positioned in the intersection of  \canon{v_2}{v_3} and \canon{v_3}{v_2} such that $v_3 r_2$ is parallel to $v_1 w$. Since $v_3 r_2$ is parallel to $v_1 w$, we can repeat placing this configuration, constructing a staircase of vertices (see Figure~\ref{fig:OrderedTheta4k+2}). When we place the $i$-th configuration, we place vertices $l_{2i-1}$, $v_{2i}$, $l_{2i}$, $r_{2i-1}$, $v_{2i+1}$, and $r_{2i}$. 

  \begin{figure}[ht]
    \begin{center}
      \includegraphics{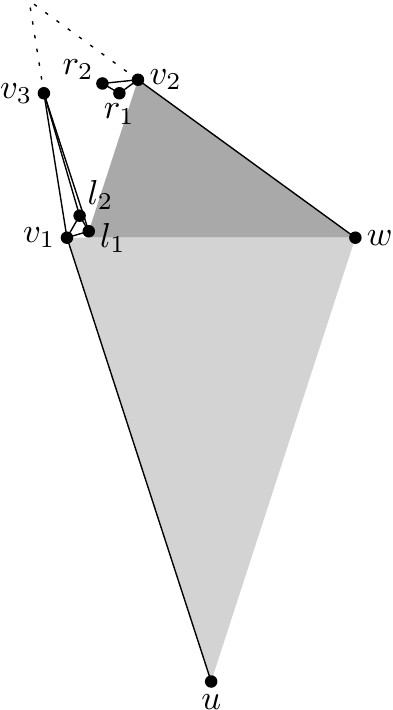}
    \end{center}
    \caption{A lower bound for the ordered \graph{2}.}
    \label{fig:OrderedTheta4k+2}
  \end{figure}

  We insert these vertices into the ordered \graph{2} in the following order: starting from the last configuration down to the first one, insert the vertices of the $i$-th configuration in the order $r_{2i}$, $r_{2i-1}$, $v_{2i+1}$, $l_{2i}$, $l_{2i-1}$, $v_{2i}$. Finally, we insert $w$, $v_1$, and $u$. The resulting ordered \graph{2} is essentially a path between $u$ and $w$ and is shown in Figure~\ref{fig:OrderedTheta4k+2}. 

  We note that edges $u v_1$ and edges of the form $v_i v_{i+2}$ (for odd $i \geq 1$) lie on a line. We also note that edges $w v_2$ and edges of the form $v_i v_{i+2}$ (for even $i \geq 2$) lie on a line. Let $x$ be the intersection of these two lines. Hence, as the number of vertices approaches infinity, the total length of the path approaches $|u x| + |x w|$. Using that $\angle x u w = \theta$, $\angle x w u = (\pi + \theta)/2$, $\angle u x w = (\pi - 3 \theta)/2$, and the law of sines, we compute the following edge lengths: 
  \begin{eqnarray*}
    |u x| &=& |u w| \cdot \frac{\sin \left( \frac{\pi + \theta}{2} \right)}{\sin \left( \frac{\pi - 3 \theta}{2} \right)} \\ \\
    |x w| &=& |u w| \cdot \frac{\sin \theta}{\sin \left( \frac{\pi - 3 \theta}{2} \right)}
  \end{eqnarray*}

  Hence, the spanning ratio of the ordered \graph{2} is at least $(\sin ((\pi + \theta)/2) + \sin \theta)/\sin ((\pi - 3 \theta)/2)$, which can be rewritten to $1/(1 - 2 \sin(\theta/2))$.
\end{proof}

Since Bose~\etal~\cite{BGM04} showed that the \graph{2} has a spanning ratio of at most $1/(1 - 2 \sin(\theta/2))$, this lower bound implies the following theorem. 

\begin{theorem}
  The ordered \graph{2} $(k \geq 2)$ has a tight spanning ratio of $\frac{1}{1 - 2 \sin(\frac{\theta}{2})}$.
\end{theorem}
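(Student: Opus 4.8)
The plan is to recognize that this final statement is not a fresh result but a \emph{sandwich}: it merely records that the matching upper and lower bounds established just above coincide, so the worst-case spanning ratio is pinned to a single value and is therefore tight. Writing $t^\star$ for the supremum of $\delta(u,w)/|uw|$ over all point sets, all insertion orders, and all vertex pairs $u,w$ realizing an ordered \graph{2} with $k \ge 2$, I must show $t^\star = 1/(1-2\sin(\theta/2))$, which splits into the two inequalities $t^\star \le 1/(1-2\sin(\theta/2))$ and $t^\star \ge 1/(1-2\sin(\theta/2))$.

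First I would dispatch the upper bound by citing Bose~\etal~\cite{BGM04}, who proved that every ordered \graph{2} is a $1/(1-2\sin(\theta/2))$-spanner whenever $\theta < \pi/3$. The hypothesis $k \ge 2$ is exactly what makes this applicable: it forces at least $4k+2 \ge 10$ cones, hence $\theta = 2\pi/(4k+2) \le \pi/5 < \pi/3$, so the denominator $1-2\sin(\theta/2)$ is positive and the cited bound is finite. (For $k=1$ one would have $\theta=\pi/3$ and $1-2\sin(\theta/2)=0$, which is precisely why the statement is restricted to $k \ge 2$.) This yields $t^\star \le 1/(1-2\sin(\theta/2))$ with no further work.

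Then I would supply the lower bound by appealing to the preceding lemma, whose staircase construction exhibits, for every $k \ge 2$, a concrete point set and insertion order producing an ordered \graph{2} in which the essentially unique path from $u$ to $w$ has length tending to $|uw|/(1-2\sin(\theta/2))$ as the number of vertices grows. This gives $t^\star \ge 1/(1-2\sin(\theta/2))$. Combining the two inequalities collapses them to the claimed equality, and since the lower-bound value is attained in the limit, the ratio is tight.

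I expect no genuine obstacle: all the analytic content — the spanner upper bound, the explicit staircase, and the trigonometric simplification to $1/(1-2\sin(\theta/2))$ — has already been carried out before this point, so the proof is a two-line citation-and-combine. The only thing to state carefully is the domain: the lower-bound construction needs $k \ge 2$ for its six-point configuration to be arranged as drawn in Figure~\ref{fig:OrderedTheta4k+2}, and the imported upper bound needs $\theta < \pi/3$; both hold simultaneously exactly on $k \ge 2$, which is why that restriction appears in the statement.
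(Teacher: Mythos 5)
Your proposal is correct and matches the paper's own argument exactly: the theorem is obtained by combining the upper bound of Bose~\etal~\cite{BGM04} with the staircase lower bound of the immediately preceding lemma. Your additional remark about why $k \geq 2$ is needed (so that $\theta \leq \pi/5 < \pi/3$ and $1 - 2\sin(\theta/2) > 0$) is a sound clarification the paper leaves implicit.
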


We also note that since the worst case spanning ratio of the unordered $\theta_{4k+2}$-graph~\cite{BCMRV15} is $1 + 2 \sin(\theta/2)$, this shows that the ordered \graph{2} has a worse worst case spanning ratio. 

\begin{lemma}
  The ordered \graph{3} $(k \geq 1)$ has spanning ratio at least $\frac{\cos \left( \frac{\theta}{4} \right) + \sin \theta}{\cos \left( \frac{3\theta}{4} \right)}$.
\end{lemma}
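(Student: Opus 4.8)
The plan is to mirror the lower-bound constructions already used for the ordered \graph{4} and \graph{2}: build a ``staircase'' point set whose ordered \graph{3} is a single path from $u$ to $w$, choose an insertion order that forbids shortcuts, and then take the number of vertices to infinity so that the path length approaches $|u x| + |x w|$, where $x$ is the common intersection of the two lines carrying the alternating edges of the path. The ratio is then computed by identifying the angles of triangle $uxw$ and applying the law of sines, exactly as in the two earlier lower bounds.

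For the construction I would fix $u$, place $w$ arbitrarily close to the right boundary of $C_0^u$, and place $v_1$ arbitrarily close to the left corner of \canon{u}{w}, so that the edge $u v_1$ runs along the \emph{left} boundary of $C_0^u$. I then place the remaining vertices $v_2, v_3, \dots$ alternately near appropriate corners of the canonical triangles of the two most recently placed vertices (roughly, $v_{i+1}$ near a corner of \canon{v_{i-1}}{v_i}), arranging that the odd-indexed edges $u v_1, v_1 v_3, v_3 v_5, \dots$ all lie on one line (the left boundary of $C_0^u$) and the even-indexed edges $w v_2, v_2 v_4, \dots$ all lie on a second line, the two lines meeting at $x$. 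The insertion order runs from the far end of the staircase inward, finishing with $w$, $v_1$, $u$, so that when each vertex is inserted all previously inserted vertices fall into a single one of its cones; this is what guarantees that no shortcut edge is ever added and that the graph is a single $u$--$w$ path.

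For the computation I expect the angle at $u$ to be $\angle x u w = \theta$, since $u v_1$ lies on the left boundary of $C_0^u$ while $w$ sits on the right boundary. The feature that distinguishes the odd case is that the angle at $w$ picks up a quarter-cone offset: I anticipate $\angle x w u = \tfrac{\pi}{2} - \tfrac{\theta}{4}$, hence $\angle u x w = \tfrac{\pi}{2} - \tfrac{3\theta}{4}$. The law of sines then gives
\[
|u x| = |u w|\cdot\frac{\cos\!\left(\tfrac{\theta}{4}\right)}{\cos\!\left(\tfrac{3\theta}{4}\right)}, \qquad
|x w| = |u w|\cdot\frac{\sin\theta}{\cos\!\left(\tfrac{3\theta}{4}\right)},
\]
and summing yields the claimed bound $\bigl(\cos(\theta/4) + \sin\theta\bigr)/\cos(3\theta/4)$.

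The main obstacle is the combinatorial design of the staircase under an \emph{odd} number of cones. In the even cases the proofs could exploit the parallelogram identities $|v c| + \const\cdot|c w| = |w c'| + \const\cdot|c' v|$, which make the roles of $v$ and $w$ interchangeable and keep the two families of edges perfectly collinear ``for free.'' With $4k+3$ cones no such identities hold, so the exact placement of the intermediate vertices and the precise insertion order must be engineered by hand to ensure simultaneously that (i) each inserted vertex connects to the intended predecessor in the intended cone, and (ii) every earlier vertex lies in one cone of each newly inserted vertex, so that no shortcut is created. Verifying these two conditions, and in particular pinning down the $\theta/4$ offset in $\angle x w u$ that produces the odd-case bound, is the crux of the argument; once the geometry of $x$ is justified, the trigonometric simplification to the stated form is routine.
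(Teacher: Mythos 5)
Your proposal takes essentially the same approach as the paper: the paper's own proof simply declares the argument analogous to the \graph{4} lower-bound staircase with $\beta = \theta/4$, which is precisely the quarter-cone offset $\angle x w u = \frac{\pi}{2} - \frac{\theta}{4}$ you anticipate, and your law-of-sines computation reproduces the paper's expression $\bigl(\sin(\pi/2 - \theta/4) + \sin\theta\bigr)/\sin(\pi/2 - 3\theta/4) = \bigl(\cos(\theta/4)+\sin\theta\bigr)/\cos(3\theta/4)$ exactly. The staircase-verification details you flag as the crux are left equally implicit in the paper, so nothing in your outline diverges from its argument.
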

\begin{proof}
  The proof is analogous to the proof of Lemma~\ref{lem:LowerBound4k+4}, where $\beta = \theta/4$, and shows that the spanning ratio of the ordered \graph{3} is at least $(\sin(\pi/2 - \theta/4) + \sin \theta)/\sin(\pi/2 - 3 \theta/4)$, which can be rewritten to $(\cos(\theta/4) + \sin \theta)/\cos(3\theta/4)$.
\end{proof}

\begin{lemma}
  The ordered \graph{5} $(k \geq 1)$ has spanning ratio at least $1 + \frac{2 \sin \left(\frac{\theta}{2}\right) \cdot \cos \left(\frac{\theta}{4}\right)}{\cos \left(\frac{\theta}{2}\right) - \sin \left(\frac{3\theta}{4}\right)}$.
\end{lemma}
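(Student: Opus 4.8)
The plan is to mirror the zigzag lower-bound construction of Lemma~\ref{lem:LowerBound4k+4}, changing only the geometry that the larger number of cones forces. First I would fix a vertex $u$ and a vertex $w$ arbitrarily close to the right boundary of $C_0^u$, place $v_1$ arbitrarily close to the left corner of \canon{u}{w}, then $v_2$ arbitrarily close to the upper corner of \canon{w}{v_1}, and continue alternately, placing $v_i$ near the left corner of \canon{v_{i-2}}{v_{i-1}} and $v_{i+1}$ near the upper corner of \canon{v_{i-1}}{v_i}. Inserting the vertices in the order $v_n, v_{n-1}, \dots, v_1, w, u$ should again guarantee that, at the moment each vertex is inserted, all previously-inserted vertices lie in a single one of its cones, so that no shortcut edges are created and the \graph{5} collapses to a single path between $u$ and $w$.

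Second, I would extract the limiting length exactly as in Lemma~\ref{lem:LowerBound4k+4}: the edges $u v_1$ and $v_i v_{i+2}$ (odd $i$) are collinear, the edges $w v_2$ and $v_i v_{i+2}$ (even $i$) are collinear, and if $x$ denotes the intersection of these two lines then the path length tends to $|ux| + |xw|$ as the number of vertices grows. In triangle $uxw$ one has $\angle xuw = \theta$, because $ux$ runs along the left boundary of $C_0^u$ while $uw$ hugs its right boundary, so the whole computation reduces to determining the single angle $\beta = \angle x w v_1$.

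The decisive difference from the \graph{4} and the \graph{3} lies in this angle. Since $v_1$ sits at the left corner of \canon{u}{w} it lies, in the limit, horizontally to the left of $w$, and with $4k+5$ cones that horizontal direction $w v_1$ falls into the cone $C_{3k+4}^w$, whose upper boundary makes angle $3\theta/4$ with $w v_1$; hence $\beta = 3\theta/4$, in contrast with $\theta/2$ for the \graph{4} and $\theta/4$ for the \graph{3}. Using $\angle uxw = (\pi-\theta)/2 - \beta = \pi/2 - 5\theta/4$, the law of sines then gives
\[
|ux| + |xw| = |uw| \cdot \frac{\sin\left(\frac{\pi}{2} + \frac{\theta}{4}\right) + \sin\theta}{\sin\left(\frac{\pi}{2} - \frac{5\theta}{4}\right)} = |uw| \cdot \frac{\cos\left(\frac{\theta}{4}\right) + \sin\theta}{\cos\left(\frac{5\theta}{4}\right)}.
\]
A routine trigonometric rewriting, whose core is the product-to-sum identity $\cos(\theta/4)\cos(5\theta/4) = \cos^2(\theta/2) - \sin^2(3\theta/4)$ together with $\cos(\theta/4) - \cos(5\theta/4) = 2\sin(3\theta/4)\sin(\theta/2)$, turns this into the claimed bound $1 + 2\sin(\theta/2)\cos(\theta/4)/(\cos(\theta/2) - \sin(3\theta/4))$, which is precisely $1 + 2\sin(\theta/2)$ times the unordered upper bound $\cos(\theta/4)/(\cos(\theta/2) - \sin(3\theta/4))$, making the comparison with the unordered counterpart immediate.

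The hard part will be the geometric bookkeeping rather than the trigonometry. I must confirm that $v_1$ really lands in $C_{3k+4}^w$, so that $\beta$ equals $3\theta/4$ and not some other multiple of $\theta/4$, and, more delicately, that the prescribed insertion order introduces no shortcuts for an \emph{odd} number of cones, since the parallelogram identities that keep the even-cone analysis of Section~\ref{sec:Ordered4k+4} clean are not available here. Verifying that every newly inserted vertex sees all earlier vertices inside one of its cones, so that the graph is forced to be the single intended path, is the step that demands the most care; once the construction is certified, the closing identity is mechanical.
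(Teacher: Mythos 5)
Your proposal is correct and takes essentially the same route as the paper: the paper's proof simply invokes the construction of Lemma~\ref{lem:LowerBound4k+4} with $\beta = 3\theta/4$, obtaining the ratio $\bigl(\sin(\pi/2+\theta/4)+\sin\theta\bigr)/\sin(\pi/2-5\theta/4)$, which is exactly what you derive, and your identification of the cone $C_{3k+4}^w$ and the trigonometric rewriting are both correct. In fact the paper is terser than you are, leaving the no-shortcut verification and the collinearity bookkeeping implicit in the word ``analogous,'' so the care you flag as the hard part is precisely what the paper omits.
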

\begin{proof}
  The proof is analogous to the proof of Lemma~\ref{lem:LowerBound4k+4}, where $\beta = 3\theta/4$, and shows that the spanning ratio of the ordered \graph{5} is at least $(\sin(\pi/2 + \theta/4) + \sin \theta)/\sin(\pi/2 - 5 \theta/4)$, which can be rewritten to $1 + 2 \sin(\theta/2) \cdot \cos (\theta/4)/(\cos (\theta/2) - \sin (3\theta/4))$.
\end{proof}

We note that this lower bound on the spanning ratio of the ordered \graph{5} is the same as the current upper bound on $\theta$-routing on the unordered \graph{5}, which is strictly greater than the current upper bound on the spanning ratio of the unordered \graph{5}.

\section{Ordered Theta-Graphs with Few Cones}
In this section we show that ordered $\theta$-graphs with 3, 4, 5, or 6 cones are not spanners. For the ordered $\theta_4$, $\theta_5$, and $\theta_6$-graph, this is surprising, since their unordered counterparts were recently shown to be spanners~\cite{BBCRV2013,BGHI10,BMRV15}. 

For each of these ordered $\theta$-graphs, we build a tower similar to the ones from the previous section. However, unlike the towers in the previous section, the towers of ordered $\theta$-graphs that have at most 6 cones do not converge, thus giving rise to point sets where the spanning ratio depends on the size of these sets. 

\begin{lemma}
  \label{lem:Theta4NotSpanner}
  The ordered $\theta_4$-graph is not a spanner. 
\end{lemma}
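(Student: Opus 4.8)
\noindent\emph{Proof proposal.} The plan is to reuse the staircase (``tower'') construction from the proof of Lemma~\ref{lem:LowerBound4k+4}, but to exploit the fact that at $\theta = \pi/2$ this construction no longer converges to a finite limit point. First I would fix two vertices $u$ and $w$ with $w \in C_0^u$ arbitrarily close to the right boundary of $C_0^u$, and then build a staircase $v_1, v_2, \dots, v_n$ by the same recursive rule as before: place $v_1$ arbitrarily close to the left corner of \canon{u}{w}, $v_2$ arbitrarily close to the upper corner of \canon{w}{v_1}, and in general $v_i$ close to the left corner of \canon{v_{i-2}}{v_{i-1}} and $v_{i+1}$ close to the upper corner of \canon{v_{i-1}}{v_i}. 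Inserting the vertices in the order $v_n, v_{n-1}, \dots, v_1, w, u$ should again force each newly inserted vertex to see all previously inserted vertices inside a single one of its cones, so that no shortcut edges are created and the resulting ordered $\theta_4$-graph is exactly a single path between $u$ and $w$.

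The crucial difference from Lemma~\ref{lem:LowerBound4k+4} is geometric. There the edges $u v_1, v_1 v_3, v_3 v_5, \dots$ lie on one line and the edges $w v_2, v_2 v_4, \dots$ lie on a second line, and these two lines meet at a point $x$ with $\angle u x w = (\pi-\theta)/2 - \beta$; for the choice $\beta = \theta/2$ appropriate to this family this angle equals $\pi/2 - \theta$. At $\theta = \pi/2$, i.e.\ for the $\theta_4$-graph, this angle collapses to $0$, so the two lines are parallel: there is no finite intersection point, the staircase climbs indefinitely, and the recursion generating the canonical triangles no longer contracts. I would make this precise by checking that at $\theta = \pi/2$ the similar triangles produced by the recursion have ratio at least $1$, so that the length of each successive step of the staircase is bounded away from $0$.

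It then remains to estimate the path length. Since the recursion does not contract, each period of the construction contributes edges of at least some fixed positive length independent of $i$, so the total length of the $u$--$w$ path is $\Omega(n)$, whereas $|uw|$ is a constant that does not depend on $n$. Hence $\delta(u,w)/|uw| \to \infty$ as $n \to \infty$, and no constant $t$ can bound the spanning ratio: the ordered $\theta_4$-graph is not a spanner.

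The main obstacle I anticipate is the bookkeeping that certifies the insertion order produces the bare path with no shortcut edges. Because the $\theta_4$-graph is exactly the excluded boundary case $k = 0$ of the $\theta_{(4k+4)}$ family, I cannot simply invoke the earlier argument and must re-verify, for $\theta = \pi/2$, that at each insertion all previously inserted vertices indeed fall in a single cone of the new vertex, so that precisely one edge---the intended path edge---is added. A secondary point is confirming that all the ``arbitrarily close'' placements are simultaneously realizable under the general-position assumption while preserving the required cone memberships; once these are in place, the divergence of the path length follows from a short computation.
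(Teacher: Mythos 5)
Your proposal is correct and takes essentially the same route as the paper: the same staircase point set with the same top-down insertion order $v_n, \dots, v_1, w, u$, and the same key observation that at $\theta = \pi/2$ the staircase no longer converges, so the path length grows linearly in $n$ while $|uw|$ stays fixed. The only notable difference is that the paper starts with $w$ slightly to the right of the \emph{bisector} of $C_0^u$ rather than near its right boundary; this makes every staircase pair $(v_{2i-1}, v_{2i})$ an exact copy of the configuration $(u,w)$, which is precisely the self-similarity that settles the ``no shortcut'' bookkeeping you flag as the main obstacle, and it makes all diagonal edges have length $\sqrt{2}$, giving the clean path length $1 + n\sqrt{2}$.
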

\begin{proof}
  To prove that the ordered $\theta_4$-graph is not a spanner, we first construct a point set, after which we specify the order in which these vertices are inserted into the graph. We fix a vertex $u$ and a vertex $w$ slightly to the right of the bisector of $C_0^u$. Next, we place a vertex $v_1$ arbitrarily close to the left corner of \canon{u}{w} and a vertex $v_2$ arbitrarily close to the upper corner of \canon{w}{v_1}. Note that the placement of $v_2$ implies that it lies slightly to the right of the bisector of $C_0^{v_1}$. Because of this, we can repeat placing pairs of vertices in a similar fashion, constructing a staircase of vertices (see Figure~\ref{fig:OrderedTheta4}). Let $v_n$ denote the last vertex that was placed. 

  \begin{figure}[ht]
    \begin{center}
      \includegraphics{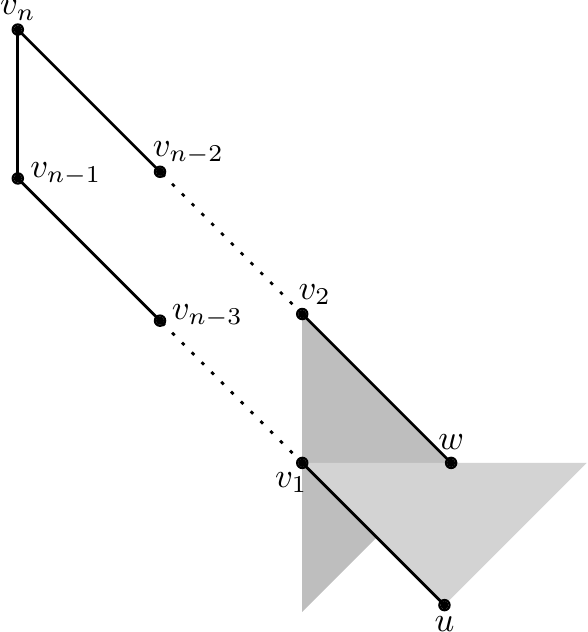}
    \end{center}
    \caption{The ordered $\theta_4$-graph is not a spanner.}
    \label{fig:OrderedTheta4}
  \end{figure}

  We insert these vertices into the ordered $\theta_4$-graph in the following order: $v_n$, $v_{n-1}$, $v_{n-2}$, $v_{n-3}$, ..., $v_2$, $v_1$, $w$, $u$. The resulting ordered $\theta_4$-graph consists of a single path between $u$ and $w$ and is shown in Figure~\ref{fig:OrderedTheta4}. 

  When we take $|u w|$ to be 1, all diagonal edges have length $\const = \sqrt{2}$ and the total length of the path is $1 + n \cdot \sqrt{2}$. Hence, we have a graph whose spanning ratio depends on the number of vertices, implying that there does not exist a constant $t$, such that it is a $t$-spanner.
\end{proof}

\begin{lemma}
  The ordered $\theta_3$-graph is not a spanner. 
\end{lemma}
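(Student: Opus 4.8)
The plan is to mimic the construction used for the ordered $\theta_4$-graph in Lemma~\ref{lem:Theta4NotSpanner}: build a self-similar staircase of vertices whose only connecting path between two fixed, nearby vertices $u$ and $w$ has length growing linearly in the number of vertices. First I would fix $u$ and place $w$ just to the right of the bisector of $C_0^u$, so that $w \in C_0^u$ and the edge $uw$ is forced once both are inserted. I would then place $v_1$ arbitrarily close to the left corner of \canon{u}{w} and $v_2$ arbitrarily close to the upper corner of \canon{w}{v_1}, and verify that this pair recreates, up to scaling and translation, the relative configuration of $w$ and $v_1$ with respect to $u$ (in the $\theta_4$ case this is exactly the observation that $v_2$ lands slightly to the right of the bisector of $C_0^{v_1}$). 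Establishing this self-similarity is what lets the pattern repeat to produce an arbitrarily long staircase $v_1, v_2, \ldots, v_n$.

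Next I would fix the insertion order to be $v_n, v_{n-1}, \ldots, v_2, v_1, w, u$. The key property to check is that, when any vertex is inserted, every previously-inserted vertex lies in a single cone of that vertex; this guarantees that the newly inserted vertex adds exactly one edge — to its staircase predecessor — and introduces no shortcut. With only three cones of aperture $2\pi/3$, this containment must be checked directly against the (very wide) cone boundaries, and it is here that the bookkeeping differs most from the $\theta_4$ case. Granting this, the resulting ordered $\theta_3$-graph is a single path from $u$ to $w$ through all the staircase vertices.

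Finally, I would normalise $|uw| = 1$ and argue that every diagonal edge $v_i v_{i+1}$ has length bounded below by a fixed positive constant $\const$ determined by the $120^\circ$ cone geometry, \emph{independent} of $i$ and $n$; this is the point at which the tower fails to converge, in contrast to the geometrically shrinking edges of the $\theta_{4k+4}$ analysis. Then the total path length is at least $n \cdot \const$, so $\delta(u,w) \geq n \cdot \const$ while $|uw| = 1$. Since this ratio is unbounded in $n$, no constant $t$ makes the graph a $t$-spanner, which is exactly the claim.

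I expect the main obstacle to be the geometric verification for the wide $120^\circ$ cones: confirming simultaneously that (i) the staircase configuration is genuinely self-similar so that it can be repeated indefinitely, and (ii) at each insertion all earlier vertices fall inside one cone, so the closest-vertex rule adds only the intended path edge and never a shortcut. Because the cones are so wide, a small misplacement can either collapse the staircase or let a vertex ``see'' an earlier vertex in a second cone, creating a shortcut that would destroy the lower bound; pinning down the exact placements (arbitrarily close to the indicated corners, as in the $\theta_4$ construction) is therefore the crux of the argument.
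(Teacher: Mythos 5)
Your proposal takes essentially the same route as the paper: the paper's proof is literally ``analogous to the proof of Lemma~\ref{lem:Theta4NotSpanner}'', i.e.\ the same staircase construction, reverse insertion order, and single-cone containment argument, with the per-step edge length pinned down as $\const = \cos(\pi/6) = \sqrt{3}/2$ so that the path has length $1 + n \cdot \sqrt{3}/2$. The only difference is that you leave the constant as an unspecified positive lower bound (which still suffices for unboundedness), whereas the paper computes it exactly; the geometric verifications you flag as the crux are likewise left implicit in the paper's ``analogous'' argument.
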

\begin{proof}
  The proof is analogous to the proof of Lemma~\ref{lem:Theta4NotSpanner}, where $\const = \cos(\pi/6) = \sqrt{3}/2$, and shows that the total length of the path is $1 + n \cdot \sqrt{3}/2$. Hence, we have a graph whose spanning ratio depends on the number of vertices, implying that there does not exist a constant $t$, such that it is a $t$-spanner.
\end{proof}

\begin{lemma}
  The ordered $\theta_5$-graph is not a spanner. 
\end{lemma}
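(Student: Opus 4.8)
The plan is to reuse the tower construction from Lemma~\ref{lem:LowerBound4k+4} (the one that also underlies the ordered \graph{5} lower bound), instantiated for the genuine $\theta_5$-graph, where $\theta = 2\pi/5$ and the relevant tower angle is $\beta = 3\theta/4$. Concretely, I would fix $u$ and a vertex $w$ near the right boundary of $C_0^u$, place $v_1$ arbitrarily close to the left corner of \canon{u}{w} and $v_2$ arbitrarily close to the upper corner of \canon{w}{v_1}, and then iterate the two placement steps ``$v_i$ near the left corner of \canon{v_{i-2}}{v_{i-1}}'' and ``$v_{i+1}$ near the upper corner of \canon{v_{i-1}}{v_i}'', exactly as in the lower-bound construction. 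Inserting the vertices in the order $v_n, v_{n-1}, \ldots, v_1, w, u$ should again make the resulting ordered $\theta_5$-graph a single path from $u$ to $w$.

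The crucial difference from the convergent $k \geq 1$ cases is the behaviour of the intersection point $x$ of the two edge-families (the line carrying $u v_1$ together with the odd-index diagonals $v_i v_{i+2}$, and the line carrying $w v_2$ together with the even-index diagonals). By the computation in Lemma~\ref{lem:LowerBound4k+4} the angle at that intersection is $\angle u x w = (\pi-\theta)/2 - \beta = \pi/2 - 5\theta/4$, and the finite limiting length $|u x| + |x w|$ carries a factor $1/\sin\!\left((\pi-\theta)/2 - \beta\right) = 1/\cos(5\theta/4)$ in its denominator. At $\theta = 2\pi/5$ we have $5\theta/4 = \pi/2$, so this angle is exactly $0$: the two families of edges are parallel, the point $x$ recedes to infinity, and the tower fails to converge. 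I would turn this into a non-spanner statement exactly as in Lemma~\ref{lem:Theta4NotSpanner}: each additional pair of placed vertices advances the path by an amount bounded below by a fixed positive constant depending only on $\theta$, so the length of the $u$--$w$ path is at least $c \cdot n$ for some constant $c > 0$, while $\delta_G(u,w) = |u w|$ stays fixed. Letting $n \to \infty$ then shows that the spanning ratio cannot be bounded by any constant $t$.

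The step I expect to be the main obstacle is verifying that the constructed graph really is a single path, i.e. that no shortcut edges are introduced. For the even-cone towers this was immediate from the parallelogram identities ($v c w c'$ and $v d w d'$ being parallelograms), but with five cones that symmetry is unavailable. I would instead argue directly, as in the remark closing Lemma~\ref{lem:LowerBound4k+4}, that the perturbed (``arbitrarily close'') placements can be oriented so that at the moment each vertex is inserted all previously-inserted vertices fall into a single cone of it; that vertex then contributes exactly one edge and the running graph remains a path. Checking that the prescribed corners of the canonical triangles genuinely place every earlier vertex in one cone, and that the general-position offsets can be chosen consistently for $\theta = 72^\circ$, is the delicate part; once it is established, the length computation and the divergence at $\cos(5\theta/4)=0$ are routine.
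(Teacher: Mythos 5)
Your proposal is correct and is essentially the paper's own argument: the paper also proves this lemma with a non-convergent reverse-insertion staircase whose steps have constant size, so that the path between $u$ and $w$ has length $1 + n \cdot \cos(\pi/10)/\cos(\pi/5)$ while $|u w|$ stays fixed. The only difference is cosmetic: the paper places $w$ at angle $\theta/4$ from the bisector of $C_0^u$ so that every pair $(v_{i-1}, v_i)$ repeats the configuration of $(u, w)$ exactly (which is how it dispenses with the single-path verification you flag as delicate), whereas your start near the right boundary (angle $\theta/2$) settles into that same self-similar configuration after one step, and your observation that the two edge families become parallel at $\theta = 2\pi/5$ is precisely the paper's non-convergence phenomenon stated in dual form.
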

\begin{proof}
  The proof is analogous to the proof of Lemma~\ref{lem:Theta4NotSpanner}, where vertex $w$ is placed such that the angle between $u w$ and the bisector of $C_0^u$ is $\theta/4 = \pi/10$ and $\const = \cos(\pi/10) / \cos(\pi/5)$, and shows that the total length of the path is $1 + n \cdot \cos(\pi/10) / \cos(\pi/5)$. Hence, we have a graph whose spanning ratio depends on the number of vertices, implying that there does not exist a constant $t$, such that it is a $t$-spanner. We note that the placement of $v_i$ (for even $i$) implies that the angle between $v_{i-1} v_i$ and the bisector of $C_0^{v_{i-1}}$ is $\theta/4$. Hence, every pair $v_{i-1}, v_i$ of the staircase has the same relative configuration as the pair $u, w$.
\end{proof}

\begin{lemma}
  The ordered $\theta_6$-graph is not a spanner. 
\end{lemma}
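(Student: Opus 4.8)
The plan is to mirror the lower-bound construction for the ordered \graph{2}, specialised to $k=1$, and to show that at $\theta=\pi/3$ the staircase no longer converges. First I would fix $u$ and a vertex $w$ arbitrarily close to the right boundary of $C_0^u$, place $v_1$ near the upper-left corner of \canon{u}{w}, and then repeatedly insert the same six-point configuration ($l_{2i-1}$, $v_{2i}$, $l_{2i}$, $r_{2i-1}$, $v_{2i+1}$, $r_{2i}$) used in that proof, so that the output edge $v_{2i+1} r_{2i}$ is parallel to the input edge $v_1 w$ and the configuration can be stacked. Inserting the vertices from the top configuration down to the bottom, in the order prescribed there and finishing with $w$, $v_1$, $u$, forces the ordered $\theta_6$-graph to be a single path between $u$ and $w$, because each vertex sees all previously inserted vertices inside a single cone at the moment of its insertion and therefore introduces no shortcut edge.

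Next I would isolate the feature that distinguishes $\theta_6$ from the $k\ge 2$ case. Exactly as in the \graph{2} lower bound, the edges $u v_1$ together with the edges $v_i v_{i+2}$ for odd $i$ are collinear, and the edges $w v_2$ together with the edges $v_i v_{i+2}$ for even $i$ are collinear; call the two supporting lines $\ell_1$ and $\ell_2$. Their angle of intersection is $\angle u x w = (\pi-3\theta)/2$, which at $\theta=\pi/3$ equals $0$. Hence $\ell_1$ and $\ell_2$ are parallel: the point $x$ recedes to infinity, and the closed-form lengths
\[
|ux| = |uw|\cdot\frac{\sin\!\left(\frac{\pi+\theta}{2}\right)}{\sin\!\left(\frac{\pi-3\theta}{2}\right)},\qquad
|xw| = |uw|\cdot\frac{\sin\theta}{\sin\!\left(\frac{\pi-3\theta}{2}\right)}
\]
diverge because the common denominator $\sin((\pi-3\theta)/2)=\sin 0=0$ while the numerators stay bounded away from $0$ (their sum is $\sqrt{3}\,|uw|$). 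Equivalently, consecutive configurations are congruent translates rather than shrinking copies: the self-similar scaling ratio, which is strictly below $1$ for $k\ge 2$, equals exactly $1$ at $\theta_6$, so every repetition of the configuration adds the same fixed length $\ell>0$ to the $u$--$w$ path.

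From here the conclusion is immediate. With $n$ stacked configurations the unique $u$--$w$ path has length at least $c\cdot n$ for a positive constant $c$ independent of $n$, while $|uw|$ is fixed. Since the underlying graph is complete, $\delta_G(u,w)=|uw|$, so the spanning ratio is at least $c\cdot n/|uw|\to\infty$. Therefore no constant $t$ makes the ordered $\theta_6$-graph a $t$-spanner.

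The main obstacle I anticipate is purely geometric bookkeeping: the ordered \graph{2} lower bound is stated only for $k\ge 2$, and for $k=1$ the cones are wider and fewer, so several of the cone--triangle intersections used to define $l_i$, $r_i$, and $v_i$ may degenerate or fall on cone boundaries. The delicate part is therefore to re-place these six points (using infinitesimal perturbations consistent with the general-position assumption) so that each one still lands in the intended cone of its neighbours, so that the self-similar parallel relation $v_{2i+1} r_{2i} \parallel v_1 w$ is preserved exactly, and so that the induced graph remains a single path with every step length bounded below independently of $n$. Verifying these incidences for $\theta_6$ --- rather than the limiting algebra, which is routine --- is where the real work lies.
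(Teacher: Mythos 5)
Your high-level strategy coincides with the paper's: build a staircase whose successive configurations are congruent translates rather than shrinking copies, so the unique $u$--$w$ path grows linearly in the number of configurations while $|u w|$ stays fixed. Your limiting algebra is also sound --- at $\theta = \pi/3$ the angle $(\pi - 3\theta)/2$ vanishes, the two supporting lines become parallel, and the per-configuration length contribution is a fixed positive constant; this matches the paper's observation that, with $|u w| = 1$, every configuration extends the path by exactly $2$. The problem is that in a lemma of this type the limiting algebra is the easy part; the substance is the construction itself, and that is exactly what you defer. You assume the six-point gadget ($l_{2i-1}$, $v_{2i}$, $l_{2i}$, $r_{2i-1}$, $v_{2i+1}$, $r_{2i}$) from the \graph{2} lower bound can be re-placed at $k = 1$ by ``infinitesimal perturbations'' so that all cone incidences survive and the graph is still a single path, and you explicitly state that verifying this ``is where the real work lies.'' A proposal that locates the real work and leaves it undone has a genuine gap, not a bookkeeping deficit.

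The gap is not hypothetical: the paper does \emph{not} reuse the six-point gadget for six cones. It designs a different, smaller gadget of four points per level --- $l_i$ placed in \canon{v_{2i-1}}{...} close to $v_{2i-1}$ but outside the previous canonical triangle, $v_{2i}$ in the right corner of \canon{w}{l_1} (respectively its analogue higher up), $r_i$ close to $v_{2i}$ but outside \canon{w}{l_i}, and $v_{2i+1}$ in the left corner of \canon{l_i}{r_i} --- and the parallelism needed to stack levels ($v_3 r_1 \parallel v_1 w$) is supplied by $r_1$ itself, so the auxiliary points $l_2$ and $r_2$ of the $k \geq 2$ construction are discarded entirely. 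This redesign is forced by precisely the degeneracies you flag: placements such as $l_2$ near the right boundary of \canon{v_1}{v_2}, or $r_2$ in $\canon{v_2}{v_3} \cap \canon{v_3}{v_2}$ on the ray making $v_3 r_2 \parallel v_1 w$, rely on cone--triangle incidences tuned to cones of aperture at most $2\pi/10$; when the aperture widens to $\pi/3$ the canonical triangles become equilateral, several defining corners land on cone boundaries, and previously inserted vertices can appear in cones that were empty in the $k \geq 2$ picture, creating shortcut edges. Note also that your justification for the path structure (``each vertex sees all previously inserted vertices inside a single cone'') is imported from the \graph{4} lower bound; it is not claimed even for the $k \geq 2$ six-point gadget, where the paper only asserts the graph is ``essentially a path.'' To close the gap you must either carry out the re-placement you describe --- which in effect reconstructs the paper's four-point gadget --- or adopt that gadget directly and verify, for the stated insertion order $r_i$, $v_{2i+1}$, $l_i$, $v_{2i}$ (followed by $w$, $v_1$, $u$), that each inserted vertex acquires edges only to its intended neighbors.
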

\begin{proof}
  To prove that the ordered $\theta_6$-graph is not a spanner, we first construct a point set, after which we specify the order in which these vertices are inserted into the graph. We fix a vertex $u$, a vertex $w$ arbitrarily close to the right boundary of $C_0^u$, and a vertex $v_1$ arbitrarily close to the left corner of \canon{u}{w}. Next, we repeat the following configuration of four vertices $l_1$, $v_2$, $r_1$, and $v_3$ an arbitrary number of times: position vertex $l_1$ in \canon{v_1}{w} arbitrarily close to $v_1$ (outside \canon{u}{w}); $v_2$ lies in the right corner of \canon{w}{l_1}; place $r_1$ in \canon{v_2}{l_1} arbitrarily close to $v_2$ (outside \canon{w}{l_1}); $v_3$ lies in the left corner of \canon{l_1}{r_1}. Note that the line segment $v_3 r_1$ is parallel to $v_1 w$. Because of this, we can repeat placing four vertices in a similar fashion, constructing a staircase of vertices (see Figure~\ref{fig:OrderedTheta6}). When we place the $i$-th configuration, we place vertices $l_i$, $v_{2i}$, $r_i$, and $v_{2i+1}$. 

  \begin{figure}[ht]
    \begin{center}
      \includegraphics{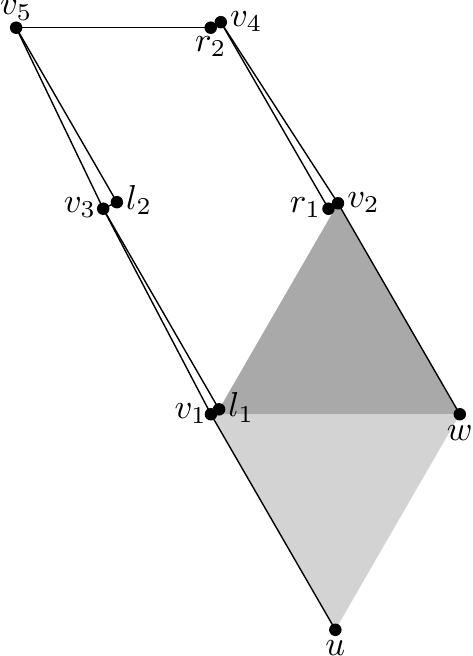}
    \end{center}
    \caption{A lower bound for the ordered $\theta_6$-graph.}
    \label{fig:OrderedTheta6}
  \end{figure} 

  We insert these vertices into the ordered $\theta_6$-graph in the following order: starting from the last configuration down to the first one, insert the vertices of the $i$-th configuration in the order $r_i$, $v_{2i+1}$, $l_i$, $v_{2i}$. Finally, we insert $w$, $v_1$, and $u$. The resulting ordered $\theta_6$-graph is essentially a path between $u$ and $w$ and is shown in Figure~\ref{fig:OrderedTheta6}. 

  When we take $|u w|$ to be 1, we note that every configuration of four vertices extends the path length by 2. Hence, we have a graph whose spanning ratio depends on the number of vertices, implying that there does not exist a constant $t$, such that it is a $t$-spanner.
\end{proof}

\section{Conclusion}
We have provided tight spanning ratios for ordered $\theta$-graphs with $4k + 2$ or $4k + 4$ cones. We also provided lower bounds for ordered $\theta$-graphs with $4k + 3$ or $4k + 5$ cones. The lower bounds for ordered $\theta$-graphs with $4k + 2$ or $4k + 5$ cones are strictly greater than those of their unordered counterparts. Furthermore, we showed that ordered $\theta$-graphs with fewer than 7 cones are not spanners. For the ordered $\theta_4$, $\theta_5$, and $\theta_6$-graph, this is surprising, since their unordered counterparts were show to be spanners~\cite{BBCRV2013,BGHI10,BMRV15}. Thus we have shown for the first time that the nice properties obtained when using ordered $\theta$-graphs come at a price. 

A number of open problems remain with respect to ordered $\theta$-graphs. For starters, though we provided lower bounds for ordered $\theta$-graphs with $4k + 3$ or $4k + 5$ cones, they do not match the current upper bound of $1 / (1 - 2 \sin(\theta/2))$. Hence, the obvious open problem is to find tight matching bounds for these graphs. 

However, more importantly, there is currently no routing algorithm known for ordered $\theta$-graphs. The $\theta$-routing algorithm used for unordered $\theta$-graphs cannot be used, since it assumes the existence of an edge in each non-empty cone. This assumption does not need to hold for ordered $\theta$-graphs, since whether or not an edge is present depends on the order of insertion as well.

\bibliographystyle{abbrv}
\bibliography{references}

\begin{thebibliography}{10}

\bibitem{ABBBKRTV13}
O.~Aichholzer, S.~W. Bae, L.~Barba, P.~Bose, M.~Korman, A.~van Renssen,
  P.~Taslakian, and S.~Verdonschot.
\newblock Theta-3 is connected.
\newblock {\em Computational Geometry: Theory and Applications (CGTA) special
  issue for CCCG 2013}, 47(9):910--917, 2014.

\bibitem{BBCRV2013}
L.~Barba, P.~Bose, J.-L. De~Carufel, A.~van Renssen, and S.~Verdonschot.
\newblock On the stretch factor of the theta-4 graph.
\newblock In {\em Proceedings of the 13th Algorithms and Data Structures
  Symposium (WADS 2013)}, volume 8037 of {\em Lecture Notes in Computer
  Science}, pages 109--120, 2013.

\bibitem{BGHI10}
N.~Bonichon, C.~Gavoille, N.~Hanusse, and D.~Ilcinkas.
\newblock Connections between theta-graphs, {D}elaunay triangulations, and
  orthogonal surfaces.
\newblock In {\em Proceedings of the 36th International Conference on Graph
  Theoretic Concepts in Computer Science (WG 2010)}, pages 266--278, 2010.

\bibitem{BCMRV15}
P.~Bose, J.-L. De~Carufel, P.~Morin, A.~van Renssen, and S.~Verdonschot.
\newblock Towards tight bounds on theta-graphs: {M}ore is not always better.
\newblock {\em Accepted, pending minor revisions, to Theoretical Computer
  Science (TCS)}, 2015.

\bibitem{BFRV14}
P.~Bose, R.~Fagerberg, A.~van Renssen, and S.~Verdonschot.
\newblock Optimal local routing on {D}elaunay triangulations defined by empty
  equilateral triangles.
\newblock {\em Accepted to SIAM Journal on Computing (SICOMP)}, 2015.

\bibitem{BGM04}
P.~Bose, J.~Gudmundsson, and P.~Morin.
\newblock Ordered theta graphs.
\newblock {\em Computational Geometry: Theory and Applications (CGTA)},
  28(1):11--18, 2004.

\bibitem{BMRV15}
P.~Bose, P.~Morin, A.~van Renssen, and S.~Verdonschot.
\newblock The $\theta_5$-graph is a spanner.
\newblock {\em Computational Geometry: Theory and Applications (CGTA)},
  48(2):108--119, 2015.

\bibitem{C89}
P.~Chew.
\newblock There are planar graphs almost as good as the complete graph.
\newblock {\em Journal of Computer and System Sciences}, 39(2):205--219, 1989.

\bibitem{C87}
K.~Clarkson.
\newblock Approximation algorithms for shortest path motion planning.
\newblock In {\em Proceedings of the 19th Annual ACM Symposium on Theory of
  Computing (STOC 1987)}, pages 56--65, 1987.

\bibitem{El09}
N.~M. El~Molla.
\newblock Yao spanners for wireless ad hoc networks.
\newblock Master's thesis, Villanova University, 2009.

\bibitem{K88}
J.~Keil.
\newblock Approximating the complete {E}uclidean graph.
\newblock In {\em Proceedings of the 1st Scandinavian Workshop on Algorithm
  Theory (SWAT 1988)}, pages 208--213, 1988.

\bibitem{NS-GSN-06}
G.~Narasimhan and M.~Smid.
\newblock {\em Geometric Spanner Networks}.
\newblock Cambridge University Press, 2007.

\bibitem{RS91}
J.~Ruppert and R.~Seidel.
\newblock Approximating the $d$-dimensional complete {E}uclidean graph.
\newblock In {\em Proceedings of the 3rd Canadian Conference on Computational
  Geometry (CCCG 1991)}, pages 207--210, 1991.

\end{thebibliography}

\end{document}